\numberwithin{equation}{section} %% Comment out for sequentially-numbered
\numberwithin{figure}{section} %% Comment out for sequentially-numbered
\theoremstyle{plain}
\theoremstyle{plain}
\newtheorem{thm}{Theorem}
  \theoremstyle{plain}
  \newtheorem{lem}{Lemma}
  \theoremstyle{plain}
  \newtheorem{prop}{Proposition}
  \theoremstyle{remark}
  \newtheorem*{note*}{Note}
  \theoremstyle{remark}
  \newtheorem*{conclusion*}{Conclusion}
  \theoremstyle{remark}
  \newtheorem{note}{Remark}
 \theoremstyle{definition}
  \newtheorem{example}{Example}
  \newtheorem{defc}{Definition}
  \theoremstyle{plain}
  \newtheorem{cor}{Corollary}
\newcommand{\kf}{\mathfrak{k}}
\newcommand{\pf}{\mathfrak{p}}
\newcommand{\abs}[1]{\lvert#1\rvert}
\newcommand{\dif}{d}
\newcommand{\cI}{{\mathcal I}}
\newcommand{\cJ}{{\mathcal J}}
\newcommand{\cL}{{\mathcal L}}
\newcommand{\mR}{\mathbb{R}}
\newcommand*\bigcdot{\mathpalette\bigcdot@{.5}}
\newcommand*\bigcdot@[2]{\mathbin{\vcenter{\hbox{\scalebox{#2}{$\m@th#1\bullet$}}}}}
\begin{document}

\title[Hamiltonian formalism for field theories and gravity]{A Hamiltonian formalism for general variational problems, with applications to first order gravity with basis}

\author{G. Quijón and S. Capriotti$^\dagger$}
\address{Departamento de Matem\'atica \\
Universidad Nacional del Sur and CONICET\\
 8000 Bah{\'\i}a Blanca \\
Argentina}
\email{$^\dagger$santiago.capriotti@uns.edu.ar}

\maketitle

\begin{abstract}
  The present article introduces a generalization of the (multisymplectic) Hamiltonian field theory for a Lagrangian density, allowing the formulation of this kind of field theories for variational problem of more general nature than those associated to a classical variational problem. It is achieved by realizing that the usual construction of the Hamiltonian equations can be performed without the use of the so called Hamiltonian section, whose existence is problematic when general variational problems are considered. The developed formalism is applied to obtain a novel multisymplectic Hamiltonian field theory for a first order formulation of gravity with basis in the full frame bundle. Also, aspects of a constraint algorithm in this generalized setting are discussed.
\end{abstract}

\tableofcontents

\section{Introduction}
\label{sec:introduction}

The Hamiltonian version of a field theory is an important step in the quantization of the system it is describing. Nevertheless, the usual construction requires the choice of a foliation of the spacetime with spacelike $3$-surfaces, and thus losing the explicit Lorenz symmetry of the theory. As a way to overcome this problem, efforts have been made to formulate Hamiltonian field theories from their Lagrangian counterpart preserving the basic symmetries of the theory \cite{W-1982,helein01hamiltonian,grabowska_lagrangian_2010,KRUPKOVA200293,EcheverriaEnriquez:2005ht,Echeverria-EnrIquez20007402}. On the other hand, it is often advantageous to work with variational problems in bundles not necessarily related to a jet bundle; this is a quite common need when symmetry reduction is considered \cite{castrillon_eulerpoincare_2007} or in alternate formulations for gravitation \cite{capriotti14:_differ_palat,Capriotti2020b}. These variational problems consist into finding extremals for the functional
\[
  A_\lambda\left(\sigma\right):=\int_M\sigma^*\lambda
\]
where $\lambda\in\Omega^m\left(W\right)$ is a $m$-form on a general bundle $\tau:W\to M^m$, and $\sigma:M\to W$ is a section of this bundle annihilating a set of forms on $W$. When $W=J^{k+1}\pi$, $\lambda=\cL$ and the set of forms is the contact structure on $J^{k+1}\pi$, this setting recovers the classical variational problem. 
It is reasonable to assume that not every variational problem should necessarily occur in a jet bundle and that, even when this is the case, the sections to be integrated into the action could not be holonomic.

In the present article, we propose a generalization of the construction of Hamiltonian field theories in the realm of general variational problems.

One of the problems one encounters when trying to perform this generalization, is that the usual formulation of Hamiltonian field theory out from its Lagrangian counterpart requires geometrical constructions that are intimately related to the jet bundle structures associated to the variational problem. In particular, the \emph{Hamiltonian form}, from which the Hamilton equations are written, requires for its definition of the notion of a \emph{Hamiltonian section}, which is a section of a bundle obtained through quotient from a multimomentum space that in turn is the space of affine maps on the jet bundle. It is quite evident that some of these structures are missing in general variational problems as those indicated above.

Nevertheless, it is possible to find geometrical structures that are common among variational problems in general. Between them, the more important for our purposes is the so called \emph{unified} or \emph{Lagrangian-Hamiltonian formalism} (also called \emph{Skinner-Rusk formalism}, after the authors who first proposed it in \cite{Skinner1983GeneralizedHD}). In the case of classical field theories, these formulations has been studied in several places; for example, see \cite{1751-8121-42-47-475207,375178,DeLeon:2002fc,2004JMP....45..360E}, and \cite{Vitagliano2010857} for a formulation using secondary calculus. Also, the reader can find a proposal for second order field theories avoiding the appearance of non symmetric momenta in \cite{Prieto-Martinez2015203}. It finds immediate application in the study of the Hamiltonian formulation for field theories with singular Lagrangians, where no method is available for such formulation. As we have been warned above, a crucial feature of the previously cited articles is that they work in the realm of the so called \emph{classical variational problems}; namely, the kind of variational problems on a jet bundle $J^{k+1}\pi$ (where $\pi:P\to M$ is the bundle of fields) prescribed by a functional determined through a Lagrangian density using the formula
\[
  A_\cL\left(s\right):=\int_M\left(j^{k+1}s\right)^*\cL.
\]
The crucial feature of the unified formalism is that it is possible to be generalized to variational problems more general than the classical variational problem. A first step in this direction was taken in the works \cite{doi:10.1142/S0219887818500445,Capriotti2020b}; in these articles, and using a construction described in \cite{GotayCartan}, it was possible to find an unified formalism for first order and Lovelock gravity. Also, a variation of this formalism was successfully employed in the definition of Routh reduction of a first order field theory by a general Lie group \cite{Capriotti2019}. In principle, the method used in these works can be interpreted as \emph{ad hoc}, since the relationship between such method and the canonical construction of the unified problem is not evident at first sight. One of the purposes of the present article is to prove that the construction of Gotay also replicates the usual unified formalism when the variational problem is classical. Although the basic scheme for this version of the unified formalism was outlined in the works previously cited, it is our conviction that it deserves its own exposition in a separate article. Additionally, and as a way to stress its practical value, another physical example (aside from the applications to gravity mentioned above) will be discussed, where the generalized construction of the unified formalism shows its adequacy in dealing with variational problems of general nature \cite{F_J_de_Urries_1998}.

Once we have established that the unified formalism can be extended to general variational problems, it remains to show how a Hamiltonian field theory can be extracted from it. The second aim of the present article is to describe a construction of the usual Hamiltonian form directly from the unified formalism, without recurring to a Hamiltonian section, and to generalize this procedure to general variational problems, in order to find a Hamiltonian version for them. This scheme is then applied to find a novel Hamiltonian formulation for gravity with basis, using for this end the description for this kind of gravity theory given in \cite{doi:10.1142/S0219887818500445}.

The final section of the article discusses some aspects of the constraint algorithm from this generalized viewpoint. The formulation of this algorithm for general variational problems is applied to some well-known examples.

\section{The unified formalism for higher order field theories}
\label{sec:class-lepage-equiv}

We will devote the next section to the discussion of an unified formalism for higher order field theories. Essentially, it is based in the notion of \emph{classical Lepage equivalent problem} as it is discussed in \cite{GotayCartan}, see also \cite{Krupka1986a,Krupka1986b}. The basic idea of this approach is to consider that, as in the Lagrangian formulation the sections of the jet bundle to be used in the functional should be holonomic, the contact structure of the jet bundle can be considered as a set of constraints, and so they can be put in the Lagrangian through Lagrange multipliers; from this viewpoint, the (multi)momenta are nothing but these multipliers. As there is a general way of performing this operation geometrically, these considerations free us from having to imagine how the space of multimomenta should be for each order.

\subsection{The classical Lepage equivalent problem for first order field theories}
\label{sec:class-lepage-equiv-1}

Let us briefly describe in this section the formalism developed by Gotay in \cite{GotayCartan,Gotay1991203}. First, suppose that we have a bundle $\tau:Q\to M$, and consider the \emph{contact subbundle} $I_{{\rm con},2}^m=I_{\rm con}\cap\Lambda^m_2 J^1\tau$ spanned by $m$-forms which are $2$-vertical and which, in view of the observations above, admits the following description:
\begin{equation}\label{eq:ContactFields}
  \left.I_{{\rm con},2}^m\right|_{j_x^1s}=\mathcal{L} \left\{\alpha\circ\theta\wedge\beta:\alpha\in T^*_{s\left(x\right)}Q, \beta\in\left(\Lambda^{m-1}_1J^1\tau\right)_{j_x^1s}\right\}\subset\wedge^m_2\left(J^1\tau\right),
\end{equation}
where
\[
  \left.\theta\right|_{j_x^1s}:=T_{j_x^1s}\tau_{10}-T_xs\circ T_{j_x^1s}\tau_1
\]
is the $V\tau$-valued contact $1$-form on $J^1\tau$. Also, the notation $\mathcal{L}\{\cdot \}$ denotes the linear span; in other words, an element $\rho$ in the contact subbundle $I_{{\rm con},2}^m$ is of the form
\begin{equation*}
 \rho=(\alpha_1\circ\theta)\wedge\beta_1+\dots+(\alpha_k\circ\theta)\wedge\beta_k,
\end{equation*}
sor some $k\in\mathbb{N}$ and with $\alpha_i,\beta_i$. We will call an element of $I_{{\rm con},2}^m$ with a single summand (i.e., $k=1$) a \emph{simple element}. Most of the proofs involving $I_{{\rm con},2}^m$ will be done for simple elements, since the case of arbitrary elements is similar.

Next, given a Lagrangian density ${\widehat{\cL}}$ considered as a $\tau_1$-horizontal $m$-form on $J^1\tau$, we construct the affine subbundle
\[
  W_{\widehat{\cL}}:={\widehat{\cL}}+I_{{\rm con},2}^m\subset\wedge^m_2\left(J^1\tau\right),
\]
which becomes a bundle $\tau_{\widehat{\cL}}:W_{\widehat{\cL}}\to J^1\tau$ through the restriction of the natural map $\tau_2^m:\wedge^m_2\left(J^1\tau\right)\to J^1\tau$. Because $\wedge^m_2\left(J^1\tau\right)$ has a canonical $m$-form, we can pullback it along the inclusion $W_{\widehat{\cL}}\hookrightarrow\wedge^m_2\left(J^1\tau\right)$ in order to have a canonical $m$-form $\lambda_{\widehat{\cL}}$ on $W_{\widehat{\cL}}$. Then we have the following characterization for the extremals of the Lagrangian field theory on the bundle $\tau:Q\to M$ associated to the Lagrangian density ${\widehat{\cL}}$.
\begin{prop}\label{prop:FieldTheoryEqsWL}
A section $s\colon U\subset M\rightarrow Q$ is critical for $(\tau\colon Q\rightarrow M,{\widehat{\cL}})$ if and only if there exists a section $\Gamma\colon U\subset M\rightarrow W_{{\widehat{\cL}}}$ such that
\begin{enumerate}[label={\arabic*)},nolistsep]
\item $\Gamma$ covers $s$, i.e. $\tau_{10}\circ\tau_{{\widehat{\cL}}}\circ\Gamma=s$, and
\item $\Gamma^*\left(X\lrcorner d\lambda_{{\widehat{\cL}}}\right)=0$, for all $X\in\mathfrak{X}^V(W_{{\widehat{\cL}}})$.
\end{enumerate}
\end{prop}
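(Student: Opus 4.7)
The plan is to exploit the tower of fibrations
\[
  W_{\widehat{\cL}}\xrightarrow{\tau_{\widehat{\cL}}}J^1\tau\xrightarrow{\tau_{10}}Q\xrightarrow{\tau}M
\]
by analyzing condition 2) separately on three natural classes of $M$-vertical vector fields on $W_{\widehat{\cL}}$: those vertical over $J^1\tau$, those projecting to $\tau_{10}$-vertical vectors on $J^1\tau$, and those projecting to $\tau$-vertical vectors on $Q$. Since $\lambda_{\widehat{\cL}}$ is the pullback of the tautological $m$-form on $\wedge^m_2 J^1\tau$, at a point $p=\widehat{\cL}|_{j_x^1 s}+\rho\in W_{\widehat{\cL}}$ one has
\[
  \lambda_{\widehat{\cL}}|_p(V_1,\dots,V_m)=\bigl(\widehat{\cL}|_{j_x^1 s}+\rho\bigr)\bigl(T\tau^m_2V_1,\dots,T\tau^m_2V_m\bigr),
\]
so in each of the three classes the computation reduces to a tautological one on $\wedge^m_2 J^1\tau$ with the multiplier $\rho$ made explicit.

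For the ``if'' direction, the crucial step is to show that condition 2) tested on $\tau_{\widehat{\cL}}$-vertical fields forces \emph{holonomy}: such a vector $X$ at $p$ can be identified with an element $\xi\in I^m_{\mathrm{con},2}|_{j_x^1 s}$ in the fiber, and since $T\tau^m_2 X=0$ one gets $X\lrcorner\lambda_{\widehat{\cL}}=0$ and thus $X\lrcorner d\lambda_{\widehat{\cL}}=\mathcal{L}_X\lambda_{\widehat{\cL}}$. Evaluated on simple elements $\xi=\alpha\circ\theta\wedge\beta$ and pulled back by $\Gamma$, the tautological property of $\lambda_{\widehat{\cL}}$ should give that $\Gamma^*(X\lrcorner d\lambda_{\widehat{\cL}})=0$ for all such $\xi$ is equivalent to $(\tau_{\widehat{\cL}}\circ\Gamma)^*\theta=0$, i.e.\ $\sigma:=\tau_{\widehat{\cL}}\circ\Gamma=j^1 s$. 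Once holonomy is in force, feeding $\tau_{10}$-vertical lifts into condition 2) reproduces the Legendre relation tying the multiplier $\rho$ to the momenta $\partial\widehat{\cL}/\partial y^a_i$, while $\tau$-vertical lifts yield the Euler--Lagrange equations for $s$; both are direct computations in adapted coordinates $(x^i,y^a,y^a_i,p^i_a)$ on $W_{\widehat{\cL}}$, in which $\lambda_{\widehat{\cL}}$ takes the familiar De\,Donder--Weyl shape.

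For the ``only if'' direction, given a critical section $s$ one constructs $\Gamma$ along $j^1 s$ by choosing $\rho$ to be the multiplier prescribed by the Legendre transform of $\widehat{\cL}$ (the zero multiplier already suffices when one only wants to verify the coordinate Euler--Lagrange equations); conditions 1) and 2) then follow by reversing the three steps above. The main obstacle is the holonomy step: the subbundle $I^m_{\mathrm{con},2}$ is described in \eqref{eq:ContactFields} only as a linear span of simple elements, so the tautological computation must be carried out for each simple $\xi$ and one must check that, as $\alpha$ and $\beta$ range over $T^*_{s(x)}Q$ and $(\Lambda^{m-1}_1 J^1\tau)_{j_x^1 s}$ respectively, the resulting system of vanishing conditions on $\Gamma^*(X\lrcorner d\lambda_{\widehat{\cL}})$ is exactly strong enough to pin down $\sigma=j^1 s$ and no more. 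A secondary subtlety is to verify that the pullback by $\Gamma$ of the decomposition of $d\lambda_{\widehat{\cL}}$ behaves well with respect to the affine (rather than linear) structure of $W_{\widehat{\cL}}\subset\wedge^m_2 J^1\tau$.
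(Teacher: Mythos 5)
Your proposal is essentially the standard Gotay argument, and it coincides with the strategy the paper itself uses: the paper states this proposition without proof (quoting it from Gotay), but its proofs of the higher-order analogues in Section 2.3 rest on exactly your two pillars --- the fiber-translation vector field $X_\xi$ associated to $\xi\in I^m_{{\rm con},2}$ satisfies $X_\xi\lrcorner d\lambda_{\widehat{\cL}}=\tau_{\widehat{\cL}}^*\xi$ (your $X\lrcorner\lambda_{\widehat{\cL}}=0$ plus Cartan's formula is the same computation), which forces holonomy of $\tau_{\widehat{\cL}}\circ\Gamma$, and the tautological property of the canonical form, which turns the remaining vertical directions into the Legendre relation and the Euler--Lagrange equations. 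The only genuine difference is cosmetic: for the ``critical $\Rightarrow$ extremal'' direction the paper argues coordinate-freely via $w^*\lambda_{\widehat{\cL}}=w$ and $A_{\lambda_{\widehat{\cL}}}(\Gamma)=A_{\widehat{\cL}}(s)$, whereas you do the Legendre and Euler--Lagrange steps in adapted De Donder--Weyl coordinates; both are fine in first order.

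One parenthetical in your ``only if'' direction is wrong and should be deleted: the zero multiplier does \emph{not} suffice, even just to recover the Euler--Lagrange equations, because condition 2) tested on $\tau_{10}$-vertical lifts forces $p^i_a=\partial\widehat{L}/\partial y^a_i$ along $\Gamma$; with $\rho=0$ these equations fail unless $\widehat{\cL}$ is independent of the velocities. The correct construction, which you do state as the primary one, is to take $\Gamma=w\circ j^1s$ with $w$ determined by the Legendre transform (this is precisely the section $w$ the paper constructs locally in its contravariance lemma). Your worry about the affine versus linear structure is not an obstacle: $W_{\widehat{\cL}}$ is an affine bundle modeled on the vector bundle $I^m_{{\rm con},2}$, so the fiber translations $X_\xi$ are globally well defined and span $V\tau_{\widehat{\cL}}$, which is all the argument needs.
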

Here $\mathfrak{X}^{V}(W_{{\widehat{\cL}}})$ denotes the vector fields which are vertical w.r.t. the projection $W_{{\widehat{\cL}}}\to M$. $\Gamma$ is called a \emph{solution} of $(\tau\colon Q\rightarrow M,{\widehat{\cL}})$ or of $(W_{{\widehat{\cL}}},\lambda_{{\widehat{\cL}}})$.

\subsection{The unified formalism}
\label{sec:lepage-equiv-probl}

We will adapt this description in order to find the equations of motion for a higher order field theory given by a $\left(k+1\right)$-order Lagrangian density
\[
  \cL:J^{k+1}\pi\to\wedge^m\left(T^*M\right).
\]
Recall \cite{saunders89:_geomet_jet_bundl} that for every $k\geq1$ we have an immersion
\[
  \iota_{1,k}:J^{k+1}\pi\to J^1\pi_k.
\]
Then, the representation for this structure that we will use in the present article comes from the following result.
\begin{lem}
  The pullback form
  \[
    \theta_{k+1}:=\iota_{1,k}^*\theta,
  \]
  where $\theta\in\Omega^1\left(J^1\pi_k,V\pi_k\right)$ is the contact $1$-form on $J^1\pi_k$, induces the contact structure on $J^{k+1}\pi$.
\end{lem}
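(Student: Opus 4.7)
The plan is to establish the lemma in two steps: first, a direct intrinsic computation identifying $\theta_{k+1}$ with the standard $V\pi_k$-valued Cartan form on $J^{k+1}\pi$; then, a local coordinate check confirming that its scalar components generate the usual contact codistribution.

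For the first step, I would unwind the definitions. Given $j_x^{k+1}s\in J^{k+1}\pi$, the immersion is $\iota_{1,k}(j_x^{k+1}s)=j_x^1(j^k s)$, where $j^k s:M\to J^k\pi$ is treated as a section of $\pi_k$. This yields two compatibility identities,
\[
  (\pi_k)_{10}\circ\iota_{1,k}=\pi_{k+1,k},\qquad (\pi_k)_1\circ\iota_{1,k}=\pi_{k+1}.
\]
Substituting them into $\theta|_{j_x^1\tilde s} = T_{j_x^1\tilde s}(\pi_k)_{10}-T_x\tilde s\circ T_{j_x^1\tilde s}(\pi_k)_1$ with $\tilde s=j^k s$, and composing with $T\iota_{1,k}$, gives
\[
  \theta_{k+1}|_{j_x^{k+1}s}=T\pi_{k+1,k}-T_x(j^k s)\circ T\pi_{k+1},
\]
which is precisely the intrinsic Cartan $V\pi_k$-valued $1$-form on $J^{k+1}\pi$. (The right-hand side is manifestly $V\pi_k$-valued since $T\pi_k\circ T(j^k s)=\mathrm{id}$ on $T_xM$.)

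For the second step, I would introduce fibered coordinates $(x^i,u^\alpha_I)$, $|I|\leq k+1$, on $J^{k+1}\pi$ and $(x^i,u^\alpha_I,v^\alpha_{I,i})$, $|I|\leq k$, on $J^1\pi_k$, in which $\theta=(du^\alpha_I-v^\alpha_{I,i}dx^i)\otimes\frac{\partial}{\partial u^\alpha_I}$. The map $\iota_{1,k}$ satisfies $\iota_{1,k}^*v^\alpha_{I,i}=u^\alpha_{I+1_i}$, because a $(k+1)$-jet of $s$ realizes the partial derivatives of its lower-order coordinates through the prolongation; consequently
\[
  \iota_{1,k}^*\theta=\bigl(du^\alpha_I-u^\alpha_{I+1_i}dx^i\bigr)\otimes\frac{\partial}{\partial u^\alpha_I},\qquad|I|\leq k,
\]
whose scalar components are exactly the standard contact generators on $J^{k+1}\pi$. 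That these annihilate prolongations is a one-line consequence of the pullback,
\[
  (j^{k+1}s)^*\theta_{k+1}=(\iota_{1,k}\circ j^{k+1}s)^*\theta=(j^1(j^k s))^*\theta=0,
\]
since $\theta$ vanishes on $1$-jets of sections of $\pi_k$.

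I do not expect a serious obstacle; the lemma is essentially a definition-unwinding together with multi-index bookkeeping. The mild subtlety is tracking the value bundle of $\theta_{k+1}$: strictly, it lives in $\pi_{k+1,k}^*V\pi_k$, and one must dualize via a local frame of $V^*\pi_k$ to recover a codistribution of scalar $1$-forms. Once that identification is made explicit, the coincidence with the standard contact structure on $J^{k+1}\pi$ is transparent from the coordinate expression above.
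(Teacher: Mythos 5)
Your argument is correct, and since the paper states this lemma without proof (it is simply quoted from Saunders' book on jet bundles), there is no in-paper argument to diverge from; your definition-unwinding is the standard route. Both the intrinsic identification $\theta_{k+1}|_{j_x^{k+1}s}=T\pi_{k+1,k}-T_x\left(j^ks\right)\circ T\pi_{k+1}$ (which is well defined because $T_x\left(j^ks\right)$ depends only on the $(k+1)$-jet of $s$ at $x$) and the coordinate computation $\iota_{1,k}^*v^\alpha_{I,i}=u^\alpha_{I+1_i}$ are right, and together they show that the scalar components of $\theta_{k+1}$ are exactly the generators $du^\alpha_I-u^\alpha_{I+1_i}dx^i$, $\abs{I}\leq k$, of the contact codistribution on $J^{k+1}\pi$.
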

Let $\widehat{\cL}:J^1\pi_k\to\wedge^m\left(T^*M\right)$ be a Lagrangian density on $J^1\pi_k$ such that
\[
  \iota_{1,k}^*\widehat{\cL}=\cL;
\]
then, using the construction described in Section \ref{sec:class-lepage-equiv-1} for $\tau=\pi_k:J^k\pi\to M$, define the affine subbundle $W_{\widehat{\cL}}\subset\wedge^m_2\left(J^1\pi_k\right)$. It induces an affine subbundle $\widetilde{W_\cL}\subset\wedge^m_2\left(J^{k+1}\pi\right)$ through the formula
\[
  \left.\widetilde{W_\cL}\right|_{j_x^{k+1}s}:=\left\{\alpha\circ T_{j_x^{k+1}s}\iota_{1,k}:\alpha\in W_{\widehat{\cL}}\right\}.
\]
We are now ready for the introduction of the notion of \emph{unified formalism for higher order field theories}.
\begin{defc}[Unified formalism for $k+1$-order field theories]\label{def:k+1-unified}
  The \emph{unified formalism associated to the Lagrangian density $\cL:J^{k+1}\pi\to\wedge^m\left(T^*M\right)$} is the pair $\left(\widetilde{W_\cL},\Theta_\cL\right)$, where $\Theta_\cL$ is the pullback of the canonical $m$-form on $\wedge^m_2\left(J^{k+1}\pi\right)$ to $\widetilde{W_\cL}$. A section $\psi:U\subset M\to\widetilde{W_\cL}$ for the bundle ${\tau_\cL}:\widetilde{W_\cL}\to M$ is a solution for the \emph{Lagrangian-Hamiltonian problem posed by $\left(\widetilde{W_\cL},\Theta_\cL\right)$} is and only if
  \[
    \psi^*\left(X\lrcorner d\Theta_\cL\right)=0
  \]
  for every $X\in\mathfrak{X}^{V\widetilde{\tau_\cL}}\left(\widetilde{W_\cL}\right)$.
\end{defc}

\begin{note}
  It is customary in the literature to use the symbol $\Theta_\cL$ for the Cartan form associated to the Lagrangian density $\cL$; here we will deviate from this use, trusting that this choice of terminology will not cause any confusion, as the notion of a Cartan form will not be used in the present work. Therefore, the symbol $\Theta_\cL$ will indicate the $m$-form on $\widetilde{W_\cL}$ defined above.
\end{note}

\begin{note}
  It should be stressed that there are two different ways to define the $m$-form on $\widetilde{W_\cL}$: As described in Definition \ref{def:k+1-unified}, or using the pullback of the $m$-form on $W_{\widehat{\cL}}$ along the map $\iota_{1,k}$. It can be proved that both definitions give rise to the same form.
\end{note}

\subsubsection{Local expressions}
\label{sec:local-expressions}

Let us take a look at these constructions in local terms. Using Equation \eqref{eq:ContactFields} in the coordinates $\left(x^i,u^\alpha,u_I^\alpha\right),\abs{I}\leq k$ on $J^k\pi$, we have that $\rho\in W_{\widehat{\cL}}$ can be locally written as
\[
  \rho=\widehat{L}\eta+\sum_{i=1}^m\sum_{0\leq\abs{I}\leq k}p_\alpha^{I,i}\left(du^\alpha_I-\overline{u}_{I,j}dx^j\right)\wedge\eta_i.
\]
It means that an element  $\widetilde{\rho}\in\widetilde{W_\cL}$ becomes
\[
  \widetilde{\rho}={L}\eta+\sum_{i=1}^m\sum_{0\leq\abs{I}\leq k}p_\alpha^{I,i}\left(du^\alpha_I-{u}_{I+1_j}dx^j\right)\wedge\eta_i,
\]
where $\cL=L\eta$. Then we have coordinates $\left(x^i,u^\alpha,u_I^\alpha;p^{J,i}_\alpha\right)$, where $1\leq\abs{I}\leq k+1$ and $1\leq\abs{J}\leq k$ on $\widetilde{W_\cL}$, and in these coordinates the $m$-form $\Theta_\cL$ becomes simply
\begin{equation}\label{eq:LocalThetaL}
  \Theta_\cL={L}\eta+\sum_{i=1}^m\sum_{0\leq\abs{I}\leq k}p_\alpha^{I,i}\left(du^\alpha_I-{u}_{I+1_j}^\alpha dx^j\right)\wedge\eta_i.
\end{equation}
Please note that in the coordinates just defined, the velocities $u^\alpha_I$ are symmetric in the multiindex $I$, but the multimomentum coordinates $p_\alpha^{J,i}$ has mixed symmetry in these indices.

\subsection{Lagrangian formalism}
\label{sec:lagrangian-formalism}

We will need to relate the solutions for the Lagrangian-Hamiltonian problem with the solutions of the Euler-Lagrange equations for $\cL$. It will be done using the notion of bivariant Lepage equivalent problem from \cite{GotayCartan}. Namely, the Euler-Lagrange equations describe the extremals for the variational problem
\[
  A_\cL\left(s\right):=\int_M\left(j^{k+1}s\right)^*\cL,\qquad s:U\subset M\to P
\]
associated to the Lagrangian density $\cL$; on the other hand, the Lagrangian-Hamiltonian problem is the set of equations describing the extremals of the functional
\[
  A_{\Theta_\cL}\left(\sigma\right):=\int_M\sigma^*\Theta_\cL,\qquad\sigma:U\subset M\to\widetilde{W_\cL}.
\]
In principle, no relationship should exists between these extremals; our first result shows that every extremal of $A_{\Theta_\cL}$ projects along the map $\tau_\cL$ onto an extremal for $A_\cL$ (it is what is called \emph{covariance} of the variational problem posed by $A_{\Theta_\cL}$ in \cite{GotayCartan}).
\begin{lem}\label{lem:CovarianceUnified}
  If $\sigma:M\to\widetilde{W_\cL}$ is an extremal for $A_{\Theta_\cL}$, then there exists an extremal $s:M\to P$ for $A_\cL$ such that $\tau_\cL\circ\sigma=j^{k+1}s$.
\end{lem}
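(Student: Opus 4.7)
The plan is to read off the Lagrangian--Hamiltonian equations $\sigma^{*}(X\lrcorner d\Theta_{\cL})=0$ in the local adapted coordinates $(x^{i},u^{\alpha},u^{\alpha}_{I};p^{J,i}_{\alpha})$ introduced after \eqref{eq:LocalThetaL}, splitting them according to which fiber direction of $\widetilde{W_{\cL}}\to M$ the vertical vector field $X$ is tangent to. Denoting the local components of $\sigma$ by
\[
\sigma^{*}u^{\alpha}=s^{\alpha}(x),\qquad \sigma^{*}u^{\alpha}_{I}=s^{\alpha}_{I}(x),\qquad \sigma^{*}p^{J,i}_{\alpha}=q^{J,i}_{\alpha}(x),
\]
the task splits into showing first that $s^{\alpha}_{I}=\partial^{|I|}s^{\alpha}/\partial x^{I}$ for all $|I|\leq k+1$, so that $\tau_{\cL}\circ\sigma=j^{k+1}s$ in the sense of the lemma, and second that the resulting base section $s$ solves the $(k+1)$-order Euler--Lagrange equations associated to $\cL$.

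For the first step I would take $X=\partial/\partial p^{J,i}_{\alpha}$ with $0\leq|J|\leq k$; these are manifestly vertical over $M$. From \eqref{eq:LocalThetaL} only one summand of $\Theta_{\cL}$ depends on $p^{J,i}_{\alpha}$, and a short calculation shows that, up to sign, $X\lrcorner d\Theta_{\cL}=(du^{\alpha}_{J}-u^{\alpha}_{J+1_{r}}dx^{r})\wedge\eta_{i}$. Pulling back by $\sigma$ and using the identity $dx^{r}\wedge\eta_{i}=\delta^{r}_{i}\,\eta$ produces the system
\[
\frac{\partial s^{\alpha}_{J}}{\partial x^{i}}=s^{\alpha}_{J+1_{i}},\qquad 0\leq|J|\leq k,\quad 1\leq i\leq m,
\]
which, by an immediate induction on $|J|$, is equivalent to the holonomy condition $s^{\alpha}_{I}=\partial^{|I|}s^{\alpha}/\partial x^{I}$ for every $|I|\leq k+1$.

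For the second step I would take $X=\partial/\partial u^{\alpha}_{I}$ and analyse the resulting system from top order $|I|=k+1$ down to $|I|=0$. At the top order the equation is purely algebraic and fixes (a symmetrization of) the top multimomenta $q^{I-1_{r},r}_{\alpha}$ in terms of $(\partial L/\partial u^{\alpha}_{I})\circ\sigma$. At every intermediate level one obtains an equation of the schematic form ``$\partial L/\partial u^{\alpha}_{I}$ plus the total divergence of the $|I|$-th momenta equals the coupling with the $(|I|{+}1)$-th momenta''. Substituting the holonomy obtained in the first step and eliminating multimomenta recursively from the top downward collapses the whole system to the classical $(k+1)$-order Euler--Lagrange equations for $\cL$ applied to~$s$.

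The main obstacle is the symmetry bookkeeping: as is stressed in the remark after \eqref{eq:LocalThetaL}, the multimomentum coordinates $p^{J,i}_{\alpha}$ possess only a mixed symmetry in the pair $(J,i)$, whereas the velocity multiindex $I$ is fully symmetric. Consequently, the equations produced by $\partial/\partial u^{\alpha}_{I}$ constrain only the totally symmetric part of the multimomenta one order below, and the downward recursion must be organized so that only this symmetric part is ever propagated. This residual antisymmetric freedom is the familiar multimomentum gauge ambiguity of the Skinner--Rusk formalism; it does not affect the equations satisfied by $s$, but it is exactly the book-keeping that has to be made explicit in order to turn the sketch above into a rigorous proof.
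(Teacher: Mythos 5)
Your proposal is correct in outline but takes a genuinely different route from the paper's. The paper (following Gotay) argues intrinsically: for each contact $m$-form $\beta$ it introduces the vertical translation field $X_\beta$ of Equation \eqref{eq:TauLVertical} and uses the identity $X_\beta\lrcorner d\Theta_\cL=\tau_\cL^*\beta$ to conclude that $\phi:=\tau_\cL\circ\sigma$ annihilates all contact forms, hence $\phi=j^{k+1}s$. Your first step is the coordinate shadow of exactly this (the fields $\partial/\partial p^{J,i}_\alpha$ span the same $\tau_\cL$-vertical directions), so up to that point the two arguments essentially coincide. The real divergence is in the second step: the paper never writes the Euler--Lagrange equations at all, but instead picks a section $w:J^{k+1}\pi\to\widetilde{W_\cL}$ with $w\circ j^{k+1}s=\sigma$, invokes the tautological property $w^*\Theta_\cL=w$ together with $w\equiv\cL$ modulo contact forms to get $A_{\Theta_\cL}(\sigma)=A_\cL(s)$, and deduces extremality of $s$ at the level of the action functionals; you instead extract the Euler--Lagrange equations directly from the $\partial/\partial u^\alpha_I$-components of the unified equations. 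Your route is more explicit and sidesteps the (implicit) argument that admissible variations of $s$ lift to variations of $\sigma$, but it carries the mixed-symmetry burden you flag at the end. That bookkeeping is genuine but closable by a standard observation: the recursion determines only the symmetrizations $q^{(J,i)}_\alpha$, and the bottom-level equation involves the momenta only through fully contracted iterated total derivatives $D_iD_J q^{J,i}_\alpha$, which annihilate any part of $q^{J,i}_\alpha$ with vanishing total symmetrization; this is precisely the role played by the functions $c^{I,i}_\alpha$ with $c^{(I,i)}_\alpha=0$ in the paper's proof of the subsequent contravariance lemma. With that identity made explicit, your sketch closes into a complete proof.
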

\begin{proof}[Proof (Gotay \cite{GotayCartan})]
  Let us define
  \[
    \phi:=\tau_\cL\circ\sigma:M\to J^{k+1}\pi.
  \]
  We will prove that $\phi^*\alpha=0$ for every contact form $\alpha$ on $J^{k+1}\pi$, and that it is an extremal for the functional $A_\cL$.

  First, let $\beta$ be a contact $m$-form on $J^{k+1}\pi$; then using the linear structure of $\wedge^m_2\left(J^{k+1}\pi\right)$ we can define the $\tau_\cL$-vertical vector field
  \begin{equation}\label{eq:TauLVertical}
    X_\beta\left(\rho\right):=\left.\frac{\text{d}}{\text{d}t}\right|_{t=0}\left(\rho+t\beta\left(\tau_\cL\left(\rho\right)\right)\right).
  \end{equation}
  Because $\Theta_\cL$ comes from a canonical $m$-form, we have that
  \begin{equation}\label{eq:TauLVerticalEOM}
    X_\beta\lrcorner d\Theta_\cL=\tau_\cL^*\beta,
  \end{equation}
  and so
  \[
    \phi^*\beta=\sigma^*\tau_\cL^*\beta=\sigma^*\left(X_\beta\lrcorner d\Theta_\cL\right)=0.
  \]
  It means that $\phi$ annihilates on every contact $m$-form; it is clear that $\phi^*\beta=0$ for contact $p$-forms with $p>k$. Now if $\beta$ is a contact $p$-form with $p<m$, then $\beta\wedge\alpha$ with $\alpha$ an arbitrary $\left(m-p\right)$-form is a contact $m$-form, and so $\phi^*\beta=0$ again. Thus there exists a section $s:M\to P$ such that
  \[
    \phi=j^{k+1}s.
  \]
  Now let prove that $s$ is an extremal for $A_\cL$. Consider a section $w:J^{k+1}\pi\to\widetilde{W_\cL}$ of $\tau_\cL$ such that
  \[
    w\circ j^{k+1}s=\sigma;
  \]
  then
  \[
    \sigma^*\Theta_\cL=\left(j^{k+1}s\right)^*w^*\Theta_\cL.
  \]
  But $\Theta_\cL$, being the pullback of the canonical $m$-form on $\wedge^m_2\left(J^1\pi_k\right)$ has the tautological property, namely
  \[
    w^*\Theta_\cL=w;
  \]
  furthermore, $w$ has its image in $\widetilde{W_\cL}$, and it means that
  \[
    w\equiv\cL\mod{\text{contact forms}}.
  \]
  Therefore
  \[
    \sigma^*\Theta_\cL=\left(j^{k+1}s\right)^*w^*\Theta_\cL=\left(j^{k+1}s\right)^*w=\left(j^{k+1}s\right)^*\cL,
  \]
  so
  \[
    A_{\Theta_\cL}\left(\sigma\right)=A_\cL\left(s\right),
  \]
  and $s$ should be an extremal for $A_\cL$, as required.
\end{proof}

Obviously, the correspondence between extremals can be achieved if we can assure that every extremal for $A_\cL$ can be lifted through $\tau_\cL$ to an extremal for $A_{\Theta_\cL}$; this property is described in \cite{GotayCartan} as \emph{contravariance}. 

\begin{lem}
  Every extremal for $A_\cL$ can be lifted to an extremal for $A_{\Theta_\cL}$. 
\end{lem}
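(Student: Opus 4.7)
The plan is to produce the lift explicitly in local coordinates and then globalize via a partition of unity argument. Given an extremal $s:U\subset M\to P$ of $A_\cL$, I would look for $\sigma:U\to\widetilde{W_\cL}$ of the form
\[
  \sigma(x)=\bigl(x^i,\partial_I s^\alpha(x);\,p_\alpha^{J,i}(x)\bigr),
\]
in the coordinates of Section \ref{sec:local-expressions}, so that the identity $\tau_\cL\circ\sigma=j^{k+1}s$ holds automatically. The multimomenta $p_\alpha^{J,i}(x)$ are then the only data left to choose in order to enforce $\sigma^*(X\lrcorner d\Theta_\cL)=0$ for every $X\in\mathfrak{X}^{V\widetilde{\tau_\cL}}(\widetilde{W_\cL})$.

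The equations split naturally. Every vertical $X$ decomposes as a $\tau_\cL$-vertical part plus a lift of a $\pi_{k+1}$-vertical field on $J^{k+1}\pi$, and the unified equation is linear in $X$. For $\tau_\cL$-vertical directions, the vector fields $X_\beta$ introduced in the proof of Lemma \ref{lem:CovarianceUnified} are exhaustive, and combining the identity $X_\beta\lrcorner d\Theta_\cL=\tau_\cL^*\beta$ with the holonomy of $j^{k+1}s$ yields $\sigma^*(X_\beta\lrcorner d\Theta_\cL)=0$ for free, regardless of the multimomenta. For lifts of the vertical directions $\partial/\partial u_I^\alpha$, a direct computation from \eqref{eq:LocalThetaL} shows that pullback by $\sigma$ produces a system of Legendre-type relations for the $p_\alpha^{J,i}$ with $|J|\leq k$, together with a final condition at $|I|=0$ which coincides with the Euler-Lagrange equation for $s$.

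I would solve this Legendre system by descending induction on $|J|$, starting from $|J|=k$ where the momenta are determined (modulo their antisymmetric part) by $\partial L/\partial u_{J+1_i}^\alpha$ evaluated along $j^{k+1}s$, and descending via the classical Ostrogradsky-Tulczyjew back-substitution; the Euler-Lagrange hypothesis on $s$ closes the last equation. The main obstacle is the passage from this local construction to a global lift: the inductive scheme is chart-dependent, and the multi-indices $(J,i)$ of $p_\alpha^{J,i}$ carry only mixed symmetry, so local solutions are unique only up to their antisymmetric component. To overcome this I would glue the local solutions using a partition of unity subordinate to a coordinate cover of $M$; this is admissible because the fibers of $\tau_\cL$ are affine spaces modeled on (the pullback of) $I_{{\rm con},2}^m$, and the unified equations, being linear-affine in the momenta, are preserved by convex combinations of solutions sharing the same base component $j^{k+1}s$. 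The outcome is a global section $\sigma:M\to\widetilde{W_\cL}$ which lifts $j^{k+1}s$ and is an extremal of $A_{\Theta_\cL}$, as required.
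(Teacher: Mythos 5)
Your construction is, at its core, the same as the paper's: you reduce the unified equations along a lift of $j^{k+1}s$ to two blocks, dispose of the $\tau_\cL$-vertical block via the fields $X_\beta$ and the holonomy of $j^{k+1}s$ (exactly as in the proof of Lemma \ref{lem:CovarianceUnified}), and solve the remaining block --- the Legendre-type recursion for the $p_\alpha^{J,i}$ closed off by the Euler--Lagrange equation at $\abs{I}=0$ --- by descending induction with Ostrogradsky back-substitution, the momenta being determined only up to terms with vanishing symmetric part. The paper phrases this through a fibered section $w:J^{k+1}\pi\to\widetilde{W_\cL}$ and the tautological property (so the recursion appears with total derivatives $D_k$ rather than $\partial/\partial x^k$ along $\sigma$), but the computation is the one you describe. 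One small caveat: your splitting of an arbitrary $X\in\mathfrak{X}^{V\widetilde{\tau_\cL}}$ into a $\tau_\cL$-vertical part plus a ``lift'' of a $\pi_{k+1}$-vertical field requires a choice of complement to $V\tau_\cL$; the paper supplies this canonically via $Tw(TJ^{k+1}\pi)$, and you should say where yours comes from.

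The genuine problem is the globalization step. The recursion for the momenta is a \emph{differential} system --- the equation at level $\abs{J}$ involves $d\lambda_\alpha^{J+1_i,k}\wedge\eta_k$, i.e.\ derivatives of the momenta one level up --- so it is not a pointwise affine condition on the fibers of $\tau_\cL$, and convex combinations with non-constant weights do not preserve its solutions. Concretely, if $\lambda_a$ and $\lambda_b$ are two local solutions and $\mu=\chi\lambda_a+(1-\chi)\lambda_b$, then
\[
  d\mu^{J+1_i,k}\wedge\eta_k=\chi\, d\lambda_a^{J+1_i,k}\wedge\eta_k+(1-\chi)\, d\lambda_b^{J+1_i,k}\wedge\eta_k+\bigl(\lambda_a^{J+1_i,k}-\lambda_b^{J+1_i,k}\bigr)\partial_k\chi\,\eta,
\]
and the last term is generically nonzero: the difference $\lambda_a-\lambda_b$ is only constrained to have vanishing \emph{symmetric} part, which does not kill its contraction with the arbitrary covector $d\chi$. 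So the partition-of-unity gluing, as justified, fails. To be fair, the paper is silent on this point as well (its proof is equally local, and the non-canonical character of the higher-order Legendre map is a known issue); but since you explicitly flag globalization as the main obstacle and then resolve it by an argument that does not work, you either need to restrict the statement to extremals over a single chart, or control the transition functions of the antisymmetric ambiguities $c^{I,i}_\alpha$ (e.g.\ by fixing a global symmetrization convention or an auxiliary connection) so that the local solutions actually agree, rather than averaging them.
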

\begin{proof}
  We will adapt the proof given in \cite{GotayCartan} to our purposes. We need to construct a section $w:J^{k+1}\pi\to\widetilde{W_\cL}$ for $\tau_\cL$ such that
  \[
    \sigma:=w\circ j^{k+1}s
  \]
  is extremal for $A_{\Theta_\cL}$, for every $s:M\to P$ extremal of $A_\cL$. Therefore, $\sigma$ is a solution for the equations
  \begin{equation}\label{eq:LagrangeHamiltonProblem2}
    \sigma^*\left(X\lrcorner d\Theta_\cL\right)=0,\qquad X\in\mathfrak{X}^{V\left(\pi_{k+1}\circ\tau_\cL\right)}\left(\widetilde{W_\cL}\right).
  \end{equation}
  Taking $X=X_\beta$ for $\beta$ a contact $m$-form (see proof of Lemma \ref{lem:CovarianceUnified}), we obtain that
  \[
    \sigma^*\left(X_\beta\lrcorner d\Theta_\cL\right)=\left(j^{k+1}s\right)^*w^*\tau_\cL^*\beta=0,
  \]
  so these part of the equations are verified for any $w$. Now we have the decomposition
  \[
    \left.T\widetilde{W_\cL}\right|_{w\left(J^{k+1}\pi\right)}=Tw\left(TJ^{k+1}\pi\right)\oplus V\tau_\cL,
  \]
  so that we can consider that $X$ in Equation \eqref{eq:LagrangeHamiltonProblem2} have the form
  \[
    H=Tw\circ V
  \]
  for some $V\in\mathfrak{X}^{V\pi_{k+1}}\left(J^{k+1}\pi\right)$; so
  \[
    0=\sigma^*\left(X\lrcorner d\Theta_\cL\right)=\left(j^{k+1}s\right)^*w^*\left(Tw\circ V\lrcorner d\Theta_\cL\right)=\left(j^{k+1}s\right)^*\left(V\lrcorner dw\right)
  \]
  where the tautological property for the canonical $m$-form on $\wedge^m_2\left(J^1\pi_k\right)$ was used. Therefore, it is the equation that section $w$ should obey in order to lift the extremal $s$ for $A_\cL$ to an extremal of $A_{\Theta_\cL}$.

  Let us write the equation
  \[
    \left(j^{k+1}s\right)^*\left(V\lrcorner dw\right)=0,\qquad V\in\mathfrak{X}^{V\pi_{k+1}}\left(J^{k+1}\pi\right)
  \]
  in local terms. From discussion carried out in Subsection \ref{sec:local-expressions}, we know that there will exist local functions $\lambda^{I,i}_\alpha\in C^{\infty}\left(J^{k+1}\pi\right)$ such that
  \[
    w={L}\eta+\lambda_\alpha^{I,i}\left(du^\alpha_I-{u}_{I+1_k}dx^k\right)\wedge\eta_i,
  \]
  and then these functions should obey the following system of differential equations
  \begin{align*}
    \lambda^{\left(I,i\right)}_\alpha&=\frac{\partial L}{\partial u^\alpha_{I+1_i}},\qquad\abs{I}=k,\\
    \left(\lambda^{\left(J,i\right)}_\alpha-\frac{\partial L}{\partial u^\alpha_{J+1_i}}\right)\eta&=d\lambda^{I+1_i,k}_\alpha\wedge\eta_k,\qquad0\leq\abs{J}<k,
  \end{align*}
  together with the equations of motion
  \[
    d\lambda^i_\alpha\wedge\eta_i-\frac{\partial L}{\partial u^\alpha}\wedge\eta=0,
  \]
  which are identically satisfied on an extremal $s:M\to P$ for the functional $A_\cL$. Therefore,
  \[
    \lambda^{I,i}_\alpha=
    \begin{cases}
      \displaystyle
      \frac{\partial L}{\partial u^\alpha_{I+1_i}}+c^{I,i}_\alpha,&\abs{I}=k,\\
      \displaystyle
      \frac{\partial L}{\partial u^\alpha_{I+1_i}}+D_k\lambda_\alpha^{I+1_i,k}+c^{I,i}_\alpha,&0\leq\abs{I}<k,
    \end{cases}
  \]
  where the arbitrary functions $c^{I,i}_\alpha$ are only constrained by the requirement of having zero symmetric part, namely $c^{\left(I,i\right)}_\alpha=0$, and $D_k$ is the total derivative operator. Thus, using these definitions we can construct the section $w$ that allows us to lift an extremal of $A_\cL$ to an extremal for $A_{\Theta_\cL}$.
\end{proof}

\begin{thm}
  The projection $\tau_\cL:\widetilde{W_\cL}\to J^{k+1}\pi$ establishes a one-to-one correspondence between solutions of the Euler-Lagrange equations for $\cL$ and the solutions of the Lagrangian-Hamiltonian problem of Definition \ref{def:k+1-unified}.
\end{thm}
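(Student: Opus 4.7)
The plan is to assemble the two preceding lemmas into the bijection stated in the theorem: Lemma~\ref{lem:CovarianceUnified} supplies the forward map, and the contravariance lemma supplies a right inverse, up to an indeterminacy visible in the local formulas.

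First, I would observe that if $\sigma\colon M\to\widetilde{W_\cL}$ is any solution of the Lagrangian-Hamiltonian problem, Lemma~\ref{lem:CovarianceUnified} produces an extremal $s\colon M\to P$ of $A_\cL$ with $\tau_\cL\circ\sigma = j^{k+1}s$. This $s$ is unique because the $(k+1)$-jet prolongation is injective on sections of $\pi$ (any two sections agreeing as $(k+1)$-jets coincide pointwise). Hence we have a well-defined map $\Phi\colon\{\text{LH solutions}\}\to\{\text{EL extremals of }\cL\}$, $\sigma\mapsto s$.

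Second, surjectivity of $\Phi$ is precisely the content of the contravariance lemma: given an EL extremal $s$, that lemma explicitly constructs a section $w\colon J^{k+1}\pi\to\widetilde{W_\cL}$ of $\tau_\cL$ such that $\sigma:=w\circ j^{k+1}s$ is an LH solution, and then $\tau_\cL\circ\sigma=j^{k+1}s$ gives $\Phi(\sigma)=s$ by construction. These two paragraphs together already establish a correspondence between the two solution sets through $\tau_\cL$.

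The main obstacle, and the point that requires care, is the injectivity underlying the phrase \emph{one-to-one}. The local expression obtained in the proof of the contravariance lemma shows that the multimomentum components $\lambda^{I,i}_\alpha$ of the lift are determined by $s$ only up to arbitrary functions $c^{I,i}_\alpha$ with vanishing symmetrization $c^{(I,i)}_\alpha=0$, reflecting the mixed symmetry of the multimomenta noted in Subsection~\ref{sec:local-expressions}. Hence the correspondence in the theorem must be understood either (a) modulo this gauge freedom, identifying two LH solutions that project to the same $j^{k+1}s$, or (b) by fixing the canonical representative $c^{I,i}_\alpha\equiv 0$, which yields a genuine set-theoretic inverse to $\Phi$. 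Under either convention, the bijection then follows at once from combining Lemma~\ref{lem:CovarianceUnified} with the contravariance lemma, and the proof amounts to little more than spelling out this combination.
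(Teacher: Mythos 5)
Your proposal is correct and follows essentially the same route as the paper, which states the theorem without a separate proof precisely because it is nothing more than the combination of Lemma~\ref{lem:CovarianceUnified} (covariance) with the contravariance lemma. Your caveat about injectivity --- that the lift is determined only up to the functions $c^{I,i}_\alpha$ with vanishing symmetrization, so the ``one-to-one correspondence'' must be read either modulo this freedom or after fixing a canonical representative --- is a genuine point the paper glosses over, though it tacitly concedes it in the remark immediately following the theorem, where the arbitrary non-symmetric part of the multimomenta is said to play no role in the equations of motion.
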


A key feature of Definition \ref{def:k+1-unified} for the unified formalism is that no symmetry properties for the multimomenta are prescribed in advance. Rather, they have an arbitrary non symmetric part that does not play any role in the equations of motion.

\subsection{Hamiltonian formalism}
\label{sec:hamilt-form}

In the present section we will discuss how to relate the solutions of the Lagrangian-Hamiltonian problem with the solutions of the so called \emph{Hamilton equations of motion} (see Definition \ref{def:HamiltonEqs} below). Unlike with what happened in Lagrangian mechanics, things are not very straightforward in the Hamiltonian side, because up to now we do not have a projection from $\widetilde{W_\cL}$ onto a space of forms ($\widetilde{W_\cL}$ is a space of forms, but it contains too many velocities; we need a space of forms on $J^k\pi$ instead of on $J^{k+1}\pi$).

\subsubsection{A space of multimomenta}
\label{sec:space-multimomentum}

In order to define the space of multimomentum, let us define the map
\[
  R:\widetilde{W_\cL}\to J^{k+1}\pi\times_{J^k\pi}\wedge_2^m\left(J^k\pi\right)
\]
by the condition
\[
  R\left(\rho\right)=\left(j_x^{k+1}s,\rho'\right)
\]
if and only if
\[
  \tau_\cL\left(\rho\right)=j_x^{k+1}s\quad\text{ and }\quad\rho'\circ T_{j_x^{k+1}s}\pi_{k+1,k}=\rho.
\]
Locally, a $m$-form $\rho'$ in $\wedge^m_2\left(J^k\pi\right)$ can be written as
\[
  \rho'=q\eta+q^{I,i}du_{I}^\alpha\wedge\eta_i,\quad\abs{I}\leq k
\]
and so for $j_x^{k+1}s\in J^{k+1}\pi$ it results that
\[
  \rho'\circ T_{j_x^{k+1}s}\pi_{k+1,k}=q\eta+q_\alpha^{I,i}du_{I}^\alpha\wedge\eta_i=\left(q+q_\alpha^{\left(I,i\right)}u_{I+1_i}^\alpha\right)\eta+p^{I,i}_\alpha\theta^\alpha_{I}\wedge\eta_i;
\]
therefore the map $R$ is locally given by
\begin{equation}\label{eq:LocalMapR}
  q=L-p^{\left(I,i\right)}_\alpha u_{I+1_i}^\alpha,\qquad q^{I,i}_\alpha=p^{I,i}_\alpha.
\end{equation}
\begin{note}
  The map $R$ establishes an isomorphism of bundles on $J^k\pi$ between $\widetilde{W_\cL}$ and the subbundle $W_2$ defined in \cite{1751-8121-42-47-475207}. 
\end{note}
Then we can construct projections $p_\cL:\widetilde{W_\cL}\to\wedge^m_2\left(J^k\pi\right)$ and $p_\cL^\ddagger:=\mu\circ p_\cL$ making the following diagram commutative
\begin{equation}\label{eq:HamiltonDiagram}
  \begin{tikzcd}[row sep=1.8cm,column sep=2.3cm,ampersand replacement=\&]
    \widetilde{W_\cL}
    \arrow{dr}{p_\cL}
    \arrow[swap]{d}{\tau_\cL}
    \arrow{r}{R}
    \arrow[swap]{ddr}{p_\cL^\ddagger}
    \&
    J^{k+1}\pi\times_{J^k\pi}\wedge^m_2\left(J^k\pi\right)
    \arrow{d}{\text{pr}_2}
    \\
    J^{k+1}\pi
    \arrow{d}{\pi_{k+1,k}}
    \&
    \wedge^m_2\left(J^k\pi\right)
    % \arrow{dl}{\tau^m_k}
    \arrow{d}{\mu}
    \\
    {J^k\pi}
    \&
    J^{k+1}\pi^\ddagger
    \arrow{l}{\overline{\tau_k^m}}
  \end{tikzcd}    
\end{equation}
where $\mu:\wedge^m_2\left(J^k\pi\right)\to J^{k+1}\pi^\ddagger$ is the quotient map onto
\[
  J^{k+1}\pi^\ddagger:=\wedge^m_2\left(J^k\pi\right)/\wedge^m_1\left(J^k\pi\right),
\]
and we have introduced the handy notation
\[
  \overline{\tau^m_k}:J^{k+1}\pi^{\ddagger}\to J^k\pi.
\]
The map $p_\cL$ has nice properties regarding the canonical form on these bundles.
\begin{lem}\label{lem:hamilt-form}
  Let $\Theta\in\Omega^m\left(\wedge^m_2\left(J^k\pi\right)\right)$ be the canonical $m$-form on $\wedge^m_2\left(J^k\pi\right)$. Then
  \[
    p_\cL^*\Theta=\Theta_\cL.
  \]
\end{lem}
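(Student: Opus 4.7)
The plan is to exploit the tautological nature of the canonical $m$-forms on both $\wedge^m_2(J^{k+1}\pi)$ and $\wedge^m_2(J^k\pi)$, together with the defining property of the map $R$ that connects them. Recall that $\Theta_\cL$ is by construction the pullback of the canonical form $\Theta^{k+1}$ on $\wedge^m_2(J^{k+1}\pi)$, so at a point $\rho\in\widetilde{W_\cL}$ and vectors $v_1,\dots,v_m\in T_\rho\widetilde{W_\cL}$ one has
\[
(\Theta_\cL)_\rho(v_1,\dots,v_m)=\rho\bigl(T_\rho\tau_\cL(v_1),\dots,T_\rho\tau_\cL(v_m)\bigr),
\]
since pullback along the inclusion into $\wedge^m_2(J^{k+1}\pi)$ composed with its bundle projection equals $\tau_\cL$.

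First I would write down the analogous expression for $p_\cL^*\Theta$: using the tautological property of $\Theta$ on $\wedge^m_2(J^k\pi)$ with projection $\overline{\tau^m_k}\circ\mu$ (or more simply $\tau^m_{k,2}:\wedge^m_2(J^k\pi)\to J^k\pi$), we obtain
\[
(p_\cL^*\Theta)_\rho(v_1,\dots,v_m)=p_\cL(\rho)\bigl(T(\tau^m_{k,2}\circ p_\cL)(v_1),\dots,T(\tau^m_{k,2}\circ p_\cL)(v_m)\bigr).
\]
Next I would invoke the commutativity of the upper trapezoid of diagram~\eqref{eq:HamiltonDiagram}, namely $\tau^m_{k,2}\circ p_\cL=\pi_{k+1,k}\circ\tau_\cL$, to rewrite the pushforwards. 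The final ingredient is the defining condition of $R$: if $R(\rho)=(j^{k+1}_x s,\rho')$ then $\rho'\circ T_{j^{k+1}_xs}\pi_{k+1,k}=\rho$. Substituting these into the displayed formula yields precisely the expression for $(\Theta_\cL)_\rho$, and the identity follows.

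As a sanity check, I would verify the statement in the local coordinates already introduced. The canonical form on $\wedge^m_2(J^k\pi)$ reads $\Theta=q\,\eta+q_\alpha^{I,i}\,du^\alpha_I\wedge\eta_i$ with $|I|\leq k$; pulling back along~\eqref{eq:LocalMapR} gives
\[
p_\cL^*\Theta=\bigl(L-p^{(I,i)}_\alpha u^\alpha_{I+1_i}\bigr)\eta+p^{I,i}_\alpha\,du^\alpha_I\wedge\eta_i.
\]
Comparing with~\eqref{eq:LocalThetaL}, the $dx^j\wedge\eta_i=\delta^j_i\,\eta$ contraction turns the drift term of $\Theta_\cL$ into $-p^{I,i}_\alpha u^\alpha_{I+1_i}\,\eta$. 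Because the velocities $u^\alpha_J$ are symmetric in $J$ while the $p^{J,i}_\alpha$ are not, only the symmetric part of the multimomentum survives in this contraction, which is exactly the coefficient $p^{(I,i)}_\alpha$ appearing in $p_\cL^*\Theta$.

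The main subtlety, and the only place requiring real care, is the bookkeeping between the symmetric and non-symmetric pieces of the multimomenta: the non-symmetric part $p^{I,i}_\alpha-p^{(I,i)}_\alpha$ is visible inside the contact term $(du^\alpha_I-u^\alpha_{I+1_j}dx^j)\wedge\eta_i$ of $\Theta_\cL$, yet it must \emph{not} contribute to $p_\cL^*\Theta$ beyond the $du^\alpha_I\wedge\eta_i$ coefficient $q^{I,i}_\alpha=p^{I,i}_\alpha$. The intrinsic argument via the tautological property makes this automatic, while in coordinates it boils down to the symmetry of $u^\alpha_{I+1_i}$ in all of its indices.
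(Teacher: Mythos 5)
Your proof is correct; the paper actually states Lemma~\ref{lem:hamilt-form} without supplying a proof, and your argument -- the tautological property of the canonical forms on both $\wedge^m_2\left(J^{k+1}\pi\right)$ and $\wedge^m_2\left(J^k\pi\right)$, the commutativity $\tau^m_{k,2}\circ p_\cL=\pi_{k+1,k}\circ\tau_\cL$ from diagram~\eqref{eq:HamiltonDiagram}, and the defining relation $\rho'\circ T_{j^{k+1}_xs}\pi_{k+1,k}=\rho$ of $R$ -- is exactly the intended one. Your coordinate check, including the observation that only the symmetric part $p^{\left(I,i\right)}_\alpha$ survives in the drift term because $u^\alpha_{I+1_i}$ is symmetric in its indices, is consistent with the local expressions \eqref{eq:LocalThetaL} and \eqref{eq:LocalMapR} given in the paper.
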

Now, consider a section
\[
  h:\wedge^m_2\left(J^k\pi\right)/\wedge^m_1\left(J^k\pi\right)\to\wedge^m_2\left(J^k\pi\right)
\]
for the quotient map $\mu:\wedge^m_2\left(J^k\pi\right)\to\wedge^m_2\left(J^k\pi\right)/\wedge^m_1\left(J^k\pi\right)$, that will be called \emph{Hamiltonian section associated to $\cL$}. Having a Hamiltonian section allows us to write down Hamilton equations of motion.
\begin{defc}[Hamilton equations for higher order field theories]\label{def:HamiltonEqs}
  Let $h$ be a Hamiltonian section for $\mu$; define
  \[
    \Theta_h:=h^*\Theta.
  \]
  A section $\psi:U\subset M\to J^{k+1}\pi^{\ddagger}$ is a \emph{solution for the Hamilton equations posed by the Hamiltonian section $h$} if and only if
  \[
    \psi^*\left(X\lrcorner d\Theta_h\right)=0
  \]
  for every $X\in\mathfrak{X}^{V\left(\pi_k\circ\overline{\tau^m_k}\right)}\left(J^{k+1}\pi^{\ddagger}\right)$.
\end{defc}

\subsubsection{The Hamiltonian form associated to a Lagrangian density}
\label{sec:hamilt-sect-assoc}

Given a Lagrangian density, the usual way to construct a Hamiltonian field theory is to construct a Hamiltonian section associated to this Lagrangian. The main drawback of this approach, when one have in mind the generalization of this procedure to general variational problems, is that no notion of extended multimomentum bundle can be found in the general setting. So it is necessary to find a way to avoid this difficulty; a close examination of the construction of Hamilton equations tells us that the only role of the Hamiltonian section is to provide a manner to define the Hamiltonian section. In this vein, it could be advantageous to find a procedure allowing us to construct the Hamiltonian form directly from the Lagrangian-Hamiltonian formalism, without recurring to the Hamiltonian section. 

\begin{defc}\label{def:FirstConstraintClass}
  The \emph{first constraint manifold} $P_0\subset\widetilde{W_\cL}$ is the set
  \[
    P_0:=\left\{\rho\in\widetilde{W_\cL}:\left(\cL_Z\Theta_\cL\right)\left(\rho\right)=0\text{ for all }Z\in\mathfrak{X}^{Vp_\cL^\ddagger}\left(\widetilde{W_\cL}\right)\right\}.
  \]
\end{defc}
\begin{note}[Local description for $P_0$]\label{rem:LocalDescriptionP_0}
    In every set of adapted coordinates $\left(x^i,u^\alpha,u^\alpha_I,p_\alpha^{J,i}\right)$ on $\widetilde{W_\cL}$, the equations
  \[
    p^{\left(I,i\right)}_\alpha=\frac{\partial L}{\partial u^\alpha_{I+1_i}},\qquad\abs{I}=k,
  \]
  are a local description for $P_0$. %The result follows from the connectedness of the fibers of the set $p_\cL\left(p_0^{-1}\left(\rho_0\right)\right)$.
\end{note}
\begin{note}\label{rem:ThetaLLocal}
From the local expression for $\Theta_\cL$
\[
  \Theta_\cL=\left(L-p^{\left(I,i\right)}_\alpha u^\alpha_{I+1_i}\right)\eta+p^{I,i}_\alpha du^\alpha_I\wedge\eta_i
\]
it follows that on $P_0$, the local functions
\[
  \widehat{H}\left(x^i,u^\alpha_I,p_\beta^{I,i}\right)=L\left(x^i,u_I^\alpha\right)-p^{\left(I,i\right)}_\alpha u^\alpha_{I+1_i}
\]
are independent of the coordinates $u^\alpha_I,\abs{I}=k+1$.
\end{note}
Let $i_0:P_0\hookrightarrow\widetilde{W_\cL}$ be the canonical immersion, and define $p_0:=p_\cL^\ddagger\circ i_0$; also, define the set
\[
  C_0:=p_0\left(P_0\right)\subset J^{k+1}\pi^\ddagger.
\]
We will assume that this set is a submanifold of $J^{k+1}\pi^\ddagger$. From Remark \ref{rem:ThetaLLocal} it results that this function induces a local function $\widetilde{H}$ on $C_0$. A crucial fact about the set $P_0$ is that it can be described as the maximal set in which the form $\Theta_\cL$ is horizontal respect to the projection $p_\cL^\ddagger$; with this fact in mind, the following consequence of Lemma \ref{lem:hamilt-form} can be derived.
\begin{lem}\label{cor:Hamilt-form-2}
  Let $\cL:J^{k+1}\pi\to\wedge^m\left(T^*M\right)$ be a Lagrangian density. Then, there exists an $m$-form $\Theta'\in\Omega^m\left(C_0\right)$ such that
  \begin{equation}\label{eq:ThetaHThetaL}
    p_0^*\Theta'=i_0^*\Theta_\cL.
  \end{equation}
\end{lem}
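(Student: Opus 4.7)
The strategy is to show that the pulled-back form $i_0^*\Theta_\cL$ is basic with respect to the submersion $p_0:P_0\to C_0$, so that it descends to a unique $m$-form $\Theta'$ on $C_0$ with $p_0^*\Theta'=i_0^*\Theta_\cL$. By the assumption that $C_0$ is a submanifold of $J^{k+1}\pi^\ddagger$, the map $p_0$ is a surjective submersion onto $C_0$, and the standard descent lemma for basic forms then yields the conclusion.

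First I would verify that $\Theta_\cL$ is already horizontal with respect to $p_\cL^\ddagger$ on all of $\widetilde{W_\cL}$. Using the local formula \eqref{eq:LocalThetaL}, the $p_\cL^\ddagger$-vertical directions are spanned by $\partial/\partial u^\alpha_I$ with $\abs{I}=k+1$, and contracting such a vector with either $(L-p^{(I,i)}_\alpha u^\alpha_{I+1_i})\eta$ or $p^{I,i}_\alpha du^\alpha_I\wedge\eta_i$ (where $\abs{I}\leq k$) gives zero. Consequently, for any $p_0$-vertical field $W\in\mathfrak{X}(P_0)$, the pushforward $(i_0)_*W$ is $p_\cL^\ddagger$-vertical, and $\iota_W(i_0^*\Theta_\cL)=i_0^*(\iota_{(i_0)_*W}\Theta_\cL)=0$.

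Next I would establish invariance: $\cL_W(i_0^*\Theta_\cL)=0$. By Cartan's formula together with the horizontality just shown, $\cL_W(i_0^*\Theta_\cL)=\iota_W\,d(i_0^*\Theta_\cL)=i_0^*(\iota_{\bar W}\,d\Theta_\cL)$, where $\bar W$ is a local $p_\cL^\ddagger$-vertical extension of $(i_0)_*W$ to $\widetilde{W_\cL}$ (such an extension exists locally because $P_0$ is a submanifold and $(i_0)_*W$ already takes values in the $p_\cL^\ddagger$-vertical subbundle along $P_0$). Applying Cartan's formula once more, and using horizontality, $\iota_{\bar W}\,d\Theta_\cL=\cL_{\bar W}\Theta_\cL$, which vanishes on $P_0$ by Definition \ref{def:FirstConstraintClass}; hence $i_0^*\cL_{\bar W}\Theta_\cL=0$. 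As a local sanity check, \eqref{eq:LocalThetaL} and Remark \ref{rem:ThetaLLocal} give $i_0^*\Theta_\cL=\widehat{H}\eta+p^{I,i}_\alpha du^\alpha_I\wedge\eta_i$ with $\abs{I}\leq k$, where $\widehat H$ is independent of the fiber coordinates $u^\alpha_I,\,\abs{I}=k+1$ on $P_0$, so that every coefficient and every differential factors through $p_0$; the induced $\Theta'$ then reads $\widetilde H\eta+q^{I,i}_\alpha du^\alpha_I\wedge\eta_i$.

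The main obstacle is the invariance step: producing a $p_\cL^\ddagger$-vertical extension $\bar W$ of $(i_0)_*W$ to a neighborhood in $\widetilde{W_\cL}$ and ensuring the identity $\cL_W(i_0^*\Theta_\cL)=i_0^*\cL_{\bar W}\Theta_\cL$ truly uses only the values on $P_0$, where Definition \ref{def:FirstConstraintClass} applies. This is a local construction carried out in the adapted chart $(x^i,u^\alpha,u^\alpha_I,p^{J,i}_\alpha)$; once done, the descent is routine and the uniqueness of $\Theta'$ on each chart, together with the surjectivity of $p_0$, patches the local definitions into a globally well-defined form.
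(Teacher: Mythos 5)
Your proof is correct and follows essentially the same route as the paper, which establishes the general analogue (Lemma \ref{lem:HamiltonFormProp}) by showing that the restricted form is $p_0$-basic --- horizontal and Lie-invariant along the fibres of $p_0$ --- and then descending along the surjective submersion. The one place where you add something the paper leaves implicit is the explicit check from \eqref{eq:LocalThetaL} that $\Theta_\cL$ is $p_\cL^\ddagger$-horizontal on all of $\widetilde{W_\cL}$: unlike Definition \ref{def:FirstConstraint} in the general setting, Definition \ref{def:FirstConstraintClass} imposes only the Lie-derivative condition, so this horizontality does need to be verified separately in the classical case, and your local computation does exactly that.
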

We will see that this lemma characterizes uniquely the Hamiltonian form, without the use of a Hamiltonian section.
\begin{prop}[Hamiltonian form in classical variational problems]\label{Ejem:hamilt-form-from-map}
  The form $\Omega'$ is the Hamiltonian form associated to the Lagrangian density $\cL:J^{k+1}\pi\to\wedge^m\left(T^*M\right)$.
\end{prop}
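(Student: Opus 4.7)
The plan is to identify $\Theta'$ as the pullback of the canonical $m$-form by the classical Hamiltonian section $h$ associated to $\cL$. The identification will follow from a short diagram chase once one recognizes that this $h$ can be characterized intrinsically by the factorization
\[
  h\circ p_0 = p_\cL\circ i_0.
\]

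First I would justify the existence and interpretation of such an $h$. Locally, $p_\cL\circ i_0\colon P_0\to\wedge^m_2(J^k\pi)$ selects in every class over $C_0$ the representative whose $q$-coordinate is $\widehat{H}=L-p^{(I,i)}_\alpha u^\alpha_{I+1_i}$. By Remark \ref{rem:ThetaLLocal}, $\widehat{H}$ is independent of the jet coordinates of order $k+1$ and therefore descends along $p_0$ to a well-defined function $\widetilde{H}$ on $C_0$. Hence $p_\cL\circ i_0$ factors through $p_0$ as $h\circ p_0$ for a unique section $h$ of $\mu$ over $C_0$, with local expression $h([\rho])=\widetilde{H}\eta+q^{I,i}_\alpha du^\alpha_I\wedge\eta_i$. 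This is precisely the section produced by the generalized Legendre transform in the classical multimomentum formalism, so by definition $h^*\Theta$ is the Hamiltonian form associated to $\cL$.

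Second, I would combine this factorization with the tautological identity $p_\cL^*\Theta=\Theta_\cL$ of Lemma \ref{lem:hamilt-form}. Pulling back along $i_0$ yields
\[
  p_0^*(h^*\Theta) = (h\circ p_0)^*\Theta = (p_\cL\circ i_0)^*\Theta = i_0^*\Theta_\cL = p_0^*\Theta',
\]
where the last equality is the content of Lemma \ref{cor:Hamilt-form-2}. Since $p_0\colon P_0\to C_0$ is a surjective submersion (by construction of $C_0$ as its image and the regularity implicit in $C_0$ being a submanifold), its pullback on differential forms is injective, whence $\Theta'=h^*\Theta$, as claimed.

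The main obstacle I anticipate is verifying that the section $h$ characterized by the factorization above coincides with the standard Hamiltonian section in the higher-order multimomentum formalism. This reduces to a local matching of the representative $\widetilde{H}\eta+q^{I,i}_\alpha du^\alpha_I\wedge\eta_i$ with the usual Legendre prescription, which is essentially what Remark \ref{rem:ThetaLLocal} encodes. Once this identification is in place, the remainder of the argument is a purely formal diagram chase using the tautological property of the canonical $m$-form.
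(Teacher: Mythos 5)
Your proof is correct, but it takes a genuinely different route from the paper's. The paper argues entirely in adapted local coordinates: it writes out $\Theta_0=i_0^*\Theta_\cL$ using the constraints $p^{(I,i)}_\alpha=\partial L/\partial u^\alpha_{I+1_i}$ defining $P_0$, posits the general shape $\Theta'=M\eta+N^{I,i}_\alpha du^\alpha_I\wedge\eta_i+S^{\alpha,j}_{I,i}dp^{I,i}_\alpha\wedge\eta_j$ allowed by surjectivity of $p_\cL^\ddagger$ and $2$-verticality, and reads off the coefficients. You instead reconstruct the Hamiltonian section $h$ intrinsically from the factorization $h\circ p_0=p_\cL\circ i_0$ (legitimate, since Remark \ref{rem:ThetaLLocal} is exactly the statement that $p_\cL\circ i_0$ is constant along the fibres of $p_0$, so it descends), and then obtain $\Theta'=h^*\Theta$ by the chain $p_0^*(h^*\Theta)=(p_\cL\circ i_0)^*\Theta=i_0^*\Theta_\cL=p_0^*\Theta'$ together with injectivity of $p_0^*$ for the surjective submersion $p_0$. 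Your version is cleaner and makes uniqueness automatic, and it shows that the Hamiltonian section is \emph{recovered} by the construction rather than merely bypassed; the price is that you must still verify, in coordinates, that the section $h$ so obtained is the standard Legendre-transform section --- which is essentially the same local computation the paper performs, now relocated to the identification of $h$. The paper's coordinate route has the advantage of never invoking $h$ at all, which is more in the spirit of the article's stated goal. One cosmetic point: be aware that the paper's sign conventions are internally inconsistent here (it displays $\Theta_h=-\widetilde H\eta+\cdots$ but its own computation yields $M=\widetilde H$ with $\widehat H=L-p^{(I,i)}_\alpha u^\alpha_{I+1_i}$); your expression $h([\rho])=\widetilde H\eta+q^{I,i}_\alpha du^\alpha_I\wedge\eta_i$ is consistent with Remark \ref{rem:ThetaLLocal} and with the paper's actual conclusion.
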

\begin{proof}
  Let us see that this definition gives rise to the usual Hamiltonian form in the case of a classical variational problem; namely, we will show that $\Theta'$ from Lemma \ref{cor:Hamilt-form-2} becomes the usual Hamiltonian form
  \[
    \Theta_h=-\widetilde{H}\eta+p_\alpha^{\left(I,i\right)}du^\alpha_I\wedge\eta_i
  \]
  Now, from Remark \ref{rem:ThetaLLocal} and the local characterization for $P_0$ given in Remark \ref{rem:LocalDescriptionP_0}, we have
  \[
    \Theta_0=\widehat{H}\left(x^i,u^\alpha_I,p^{J,i}_\alpha,\frac{\partial L}{\partial u_{I+1_i}}\right)\eta+p^{J,i}_\alpha du^\alpha_J\wedge\eta_i+\left(p^{\left[I,i\right]}_\alpha+\frac{\partial L}{\partial u_{I+1_i}}\right)du^\alpha_I\wedge\eta_i,
  \]
  where we have employed the convention $\abs{J}<k,\abs{I}=1$ for the multiindices used in this expression. Because $p_\cL^\ddagger$ is surjective and $\Theta_0$ is $2$-vertical with respect to the projection $\pi_{k+1}\circ\tau_\cL$, the form $\Theta'$ should have the form
  \[
    \Theta'=M\eta+N^{I,i}_\alpha du^\alpha_I\wedge\eta_i+S^{\alpha,j}_{I,i}dp^{I,i}_\alpha\wedge\eta_j,
  \]
  and so it results 
  \[
    M=\widetilde{H},\qquad N^{I,i}_\alpha=p^{I,i}_\alpha,\qquad S^{\alpha,j}_{I,i}=0,
  \]
  giving rise to the Hamiltonian form $\Theta_h$ from Definition \ref{def:HamiltonEqs}. 
\end{proof}
Therefore, Definition \ref{def:GeneralHamForm} provides us with a correct generalization for this form, without having to appeal to a Hamiltonian section, which is hard to define in the general case.

\subsubsection{The projection of solutions of the Lagrangian-Hamiltonian problem}
\label{sec:proj-solut-lagr}

We are ready to prove the easier part of the relationship between the unified formalism and its Hamiltonian counterpart. It is analogous to the situation discussed in Section \ref{sec:lagrangian-formalism}, where you first prove that solutions project adequately, and then attack the (in general harder) problem of lift them.

\begin{prop}
  Let $\sigma:U\subset M\to P_0\subset\widetilde{W_\cL}$ be a solution for the Lagrangian-Hamiltonian problem posed by $\cL$ with associated Hamiltonian form $\Omega_h$. Then the section
  \[
    \psi:=p_0\circ\sigma:U\to C_0\subset J^{k+1}\pi^{\ddagger}
  \]
  is a solution for the Hamilton equations determined by the Hamiltonian form $\Omega_h$.
\end{prop}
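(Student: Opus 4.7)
The plan is to reduce the Hamilton equation of Definition \ref{def:HamiltonEqs} for $\psi$ to the Lagrangian--Hamiltonian equation already satisfied by $\sigma$, using the identity
\[
  p_0^*\Theta'=i_0^*\Theta_\cL
\]
of Lemma \ref{cor:Hamilt-form-2}. Factor $\sigma=i_0\circ\sigma_0$ with $\sigma_0\colon U\to P_0$, so that $\psi=p_0\circ\sigma_0$; differentiating the above identity gives $p_0^*\,d\Theta'=i_0^*\,d\Theta_\cL$. Then for any $Y\in\mathfrak{X}^{V(\pi_k\circ\overline{\tau^m_k})}\bigl(J^{k+1}\pi^{\ddagger}\bigr)$ we have
\[
  \psi^*(Y\lrcorner d\Theta')=\sigma_0^*\,p_0^*(Y\lrcorner d\Theta'),
\]
and the goal is to exhibit this as $\sigma^*(\widetilde X\lrcorner d\Theta_\cL)$ for some $\widetilde X\in\mathfrak{X}^{V\widetilde{\tau_\cL}}\bigl(\widetilde{W_\cL}\bigr)$, which then vanishes by hypothesis.

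The core step is the construction of such a lift. Diagram \eqref{eq:HamiltonDiagram} commutes with the projections of $\widetilde{W_\cL}$ and $J^{k+1}\pi^{\ddagger}$ to $M$, so $Tp_0$ sends $\widetilde{\tau_\cL}$-vertical vectors to $(\pi_k\circ\overline{\tau^m_k})$-vertical ones. Granted that $p_0$ is a submersion onto $C_0$ (an assumption already in force when $C_0$ was introduced), at each $q\in P_0$ the preimage of $Y(p_0(q))$ under $T_qp_0$ meets the vertical subspace $T_qP_0\cap V_q\widetilde{\tau_\cL}$. Selecting such preimages locally in adapted coordinates and extending them off $P_0$ produces a vector field $\widetilde X$ on a neighbourhood of $\sigma(U)$ in $\widetilde{W_\cL}$ that is $\widetilde{\tau_\cL}$-vertical, tangent to $P_0$ along $P_0$, and satisfies $Tp_0\circ\widetilde X|_{P_0}=Y\circ p_0$ along $\sigma_0(U)$.

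With this lift in hand, naturality of the interior product under pullbacks yields
\[
  p_0^*(Y\lrcorner d\Theta')=\widetilde X|_{P_0}\lrcorner p_0^*d\Theta'=\widetilde X|_{P_0}\lrcorner i_0^*d\Theta_\cL=i_0^*\bigl(\widetilde X\lrcorner d\Theta_\cL\bigr),
\]
and hence
\[
  \psi^*(Y\lrcorner d\Theta')=\sigma_0^*\,i_0^*(\widetilde X\lrcorner d\Theta_\cL)=\sigma^*(\widetilde X\lrcorner d\Theta_\cL)=0,
\]
the last equality being the Lagrangian--Hamiltonian equation for $\sigma$ applied to the vertical vector field $\widetilde X$. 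The principal obstacle I foresee is verifying the fibrewise surjectivity of $Tp_0$ restricted to vertical directions; this will follow from the explicit form of $p_0$ given by \eqref{eq:LocalMapR} combined with the algebraic character of the constraints of Remark \ref{rem:LocalDescriptionP_0}, which involve neither the base coordinates nor the top-order velocities in an essential way, so the matter is ultimately local in adapted coordinates and needs no further global structure.
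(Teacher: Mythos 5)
Your proposal is correct and follows essentially the same route as the paper: both arguments hinge on the surjectivity of $Tp_0$ on vertical directions (you lift the test field $Y$ from $C_0$ up to a vertical field $\widetilde X$ tangent to $P_0$, while the paper equivalently restricts the test fields on $C_0$ to those of the form $Tp_0\circ Y$) and then transfer the equation through the identity $p_0^*\Theta_h=i_0^*\Theta_\cL$ of Lemma \ref{cor:Hamilt-form-2}. Your version is, if anything, slightly more explicit about the extension of the lift off $P_0$ and about why the fibrewise surjectivity holds in adapted coordinates.
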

\begin{proof}
  We have to prove that
  \begin{equation}\label{eq:HamiltonFieldTheory}
    \psi^*\left(X\lrcorner d\Theta_h\right)=0
  \end{equation}
  for every $X\in\mathfrak{X}^{V\left(\pi_k\circ\overline{\tau^m_k}\right)}\left(C_0\right)$, using the fact that $\sigma$ is a solution for the Lagrangian-Hamiltonian problem associated to $\cL$. Now, the map $p_0:P_0\to C_0$ is a surjective submersion, and so we can restrict the vector fields in \eqref{eq:HamiltonFieldTheory} to be of the form
  \[
    X:=Tp_0\circ Y
  \]
  for $Y\in\mathfrak{X}^{V\left(\pi_{k+1}\circ\tau_0\right)}\left(P_0\right)$, with $\tau_0:P_0\to J^{k+1}\pi$ the restriction of $\tau_\cL$ to $P_0\subset\widetilde{W}_\cL$. Therefore the Hamilton equations for $\psi$ will read
  \begin{align*}
    \psi^*\left(X\lrcorner d\Theta_h\right)&=\left(p_0\circ\sigma\right)^*\left(\left(Tp_0\circ Y\right)\lrcorner d\Theta_h\right)\\
                                           &=\sigma^*\left(Y\lrcorner d\left(p_0\right)^*\Theta_h\right)\\
    &=\sigma^*\left(Y\lrcorner d\Theta_0\right)
  \end{align*}
  where Lemma \ref{cor:Hamilt-form-2} was used. Then
  \[
    \psi^*\left(X\lrcorner d\Theta_h\right)=\sigma^*\left(Y\lrcorner d\Theta_0\right)=0,
    \]
    and so $\psi$ is a solution for the Hamilton equations, as required.
\end{proof}

\begin{note}[On singular Lagrangians]
  This result should be compared with Proposition $1$ from \cite{Prieto-Martinez2015203}. According to these authors, in the singular Lagrangian case, the construction of the solution of Hamilton equations require the projection onto the Lagrangian side and the translation along the Legendre transformation. The proof given here seems to work well without having regularity conditions imposed on the Lagrangian.
\end{note}

\section{The Hamilton equations for general variational problems}
\label{sec:hamilt-equat-gener-2}

We will devote this section to construct a set of Hamilton-like equations of motion for a general variational problem $\left(\pi:W\to M,\lambda,\cI\right)$. With the aim of carry out this task, we will use the general construction devised in \cite{GotayCartan} in order to formulate a unified formalism for a general variational problem. As we showed in the previous section for the classical variational problem case, the Hamilton equations can be obtained directly from this formulation, without the need of defining a Hamiltonian section.

\subsection{Unified formalism for general variational problems}
\label{sec:unif-form-gener}

Let us suppose that we have a general variational problem given by the triple
\begin{equation}\label{eq:GeneralVariationalProblem}
  \left(\pi:W\to M,\lambda,\cI\right),
\end{equation}
where $\pi:W\to M$ is a bundle, $\lambda\in\Omega^m\left(W\right)$ is the so called \emph{Lagrangian form} and $\cI$ is an ideal in $\Omega^\bullet\left(W\right)$ closed by exterior differentiation. The underlying variational problem is given by the action
\[
  S\left[\sigma\right]:=\int_K\sigma^*\lambda,\qquad K\subset M
\]
restricted to sections $\sigma:M\to W$ such that $\sigma^*\cI=0$ (these kind of sections will be called \emph{admissible sections}).

\begin{example}[Classical variational problem]
  When $W=J^{k+1}\pi$, $\lambda=\cL$ and $\cI$ is the differential ideal on $J^{k+1}\pi$ generated by the contact forms, this variational problem reduces to the classical variational problem considered in Section \ref{sec:class-lepage-equiv}.
\end{example}

\begin{example}[Vakonomic Herglotz variational problem]
  \label{ex:exampl-vakon-hergl}
  Let us consider the Herglotz variational problem for this viewpoint \cite{gaset2022herglotz}. For $\pi:P\to M$ let us define the \emph{Herglotz bundle}
  \[
    H_\pi:=J^1\pi\times_M\wedge^{m-1}\left(T^*M\right).
  \]
  In terms of local coordinates, any element $z\in\wedge^{m-1}\left(T^*M\right)$ can be written as
  \[
    z=z^i\eta_i.
  \]
  Note that we have a canonical $\left(m-1\right)$-form $\Theta_{m-1}\in\Omega^{m-1}\left(\wedge^{m-1}\left(T^*M\right)\right)$ that can be pulled back to $H_\pi$; we will indicate this pullback with the same symbol. Also, given a Lagrangian density $\cL:J^1\pi\to\wedge^m\left(T^*M\right)$ considered as a horizontal $m$-form on $J^1\pi$, we can also pull it back to the Herglotz bundle. Therefore, we can define a $m$-form on $H_\pi$
  \[
    \Phi:=\cL-d\Theta_{m-1}.
  \]
  \begin{defc}[Vakonomic Herglotz variational problem]
    The \emph{Vakonomic Herglotz variational problem} is the triple
    \[
      \left(H_\pi\to M,d\Theta_{m-1},\cJ_H\right)
    \]
    where $\cJ_H$ is the exterior differential system on $H_\pi$ generated by the (pullback of) contact forms on $J^1\pi$ and $\Phi$.
  \end{defc}
  In local coordinates, it means that $\lambda=dz^i\wedge\eta_i$ and every section $\sigma\left(x\right)=\left(x,f^\alpha\left(x\right),f^\alpha_i\left(x\right),\zeta^i\left(x\right)\right)$ will be admissible if and only if
  \[
    f^\alpha_i=\frac{\partial f^\alpha}{\partial x^i},\qquad\frac{\partial\zeta^i}{\partial x^i}=L.
  \]
  The action on any section will read
  \[
    S\left[\sigma\right]=\int_K\sigma^*d\Theta_{m-1}=\int_{\partial K}\sigma^*\Theta_{m-1}.
  \]
\end{example}
In order to apply the Gotay procedure \cite{GotayCartan} for the construction of an unified formalism for a variational problem of this kind, a regularity condition for the ideal $\cI$ is needed.
\begin{defc}[Regular variational problem]
  We will say that the variational problem \eqref{eq:GeneralVariationalProblem} is \emph{regular} if and only if the set of $m$-forms in $\cI$ that are $k$-vertical, namely
  \[
    \cI^m_k:=\cI\cap\Omega^m_k\left(W\right),
  \]
  can be generated by the sections of a bundle of $m$-forms $I^m_K\subset\Omega^m\left(W\right)$.
\end{defc}

Assuming that we are working with a regular variational problem, we can construct the affine subbundle
\[
  W_\lambda:=\lambda+I_k^m\subset\wedge^m\left(T^*W\right).
\]
Because it is a submanifold of a set of $m$-forms, it results that we can pull the canonical $m$-form on $\wedge^m\left(T^*W\right)$ back to $W_\lambda$; we will indicate this pullback with the symbol $\Theta_\lambda$.
\begin{defc}[Unified formalism for regular variational problems]\label{def:UnifiedFormalismGeneral}
  The \emph{unified formalism for general variational problems of regular nature} is the variational problem posed by the following data
  \[
    \left(W_\lambda\to M,\Theta_\lambda,0\right).
  \]
\end{defc}
\begin{note}
  It is an invariant procedure that allows us to incorporate the set of restrictions in $\cI$ as part of the Lagrangian through Lagrange multipliers. These multipliers play the role of (multi) momentum from this point of view. From this viewpoint, the \emph{regularity order} $k$ in the previous definition is a measure of the number of multimomentum necessary to allow the resulting unified formalism to be able to represent the Lagrangian equations of motion, and it is (more or less) described by the order of verticality of the forms generating $\cI$.
\end{note}

\begin{example}[Unified formalism for vakonomic Herglotz variational problem]\label{def:UnifiedHerglotz}
  Because we have found a formulation of the Herglotz variational problem as a general variational problem (see Example \ref{ex:exampl-vakon-hergl}), Definition \ref{def:UnifiedFormalismGeneral} will allow us to define a unified formalism for it. The set of restrictions $\cJ$ are regular for $k=2$, because the set of $m$-forms which are $2$-vertical in this ideal is generated by sections of the subbundle
  \[
    J^m_H:=\left\{p_\alpha^i\left(du^\alpha-u^\alpha_kdx^k\right)+\mu\left(dz^i\wedge\eta_i-L\eta\right):p_\alpha^i,\mu\in\mR\right\}.
  \]
  Therefore, we can define the subbundle
  \[
    W_H:=d\Theta_{m-1}+J^m_H\subset\wedge^m\left(T^*H_\pi\right)
  \]
  and pullback the canonical $m$-form on $\wedge^m\left(T^*H_\pi\right)$ to $W_H$, giving rise to the Lagrangian form $\Theta_H$ which in local coordinates reads
  \[
    \Theta_H=dz^i\wedge\eta_i+p_\alpha^i\left(du^\alpha-u^\alpha_kdx^k\right)+\mu\left(dz^i\wedge\eta_i-L\eta\right).
  \]
  \begin{defc}[Unified formalism for Herglotz variational problem]
    The \emph{unified formalism for the vakonomic Herglotz variational problem} is the variational problem posed by the data
    \[
      V_H':=\left(p_H:W_H\to M,\Theta_H,0\right)
    \]
  \end{defc}
  Please note that the form $\Theta_H$ found here coincides with the (multicontact) \emph{Lagrangian form} $\Theta_\cL$ constructed in \cite{leon23:_multic}. The equations describing the extremals of this variational problem are
  \[
    \sigma^*\left(Z\lrcorner d\Theta_H\right)=0,\qquad Z\in\mathfrak{X}^{Vp_H}\left(W_H\right);
  \]
  Then we can establish the following result (see also \cite{gaset2022herglotz}).
  \begin{thm}
    A section of $p_H:W_H\to M$ is an extremal for the variational problem $V_H'$ if and only if the following equations hold
    \begin{align*}
      &d\mu\wedge\eta_i+\mu\frac{\partial L}{\partial z^i}\eta=0,\qquad dz^i\wedge\eta_i-L\eta=0,\qquad du^\alpha\wedge\eta_i-u^\alpha_i\eta=0,\\
      &dp_\alpha^i\wedge\eta_i-\mu\frac{\partial L}{\partial u^\alpha}\eta=0,\qquad \left(p_\alpha^i-\mu\frac{\partial L}{\partial u^\alpha_i}\right)\eta=0.\\
    \end{align*}
  \end{thm}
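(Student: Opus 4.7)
Since the claim is a local statement about a section satisfying $\sigma^*(Z\lrcorner d\Theta_H) = 0$ for every $Z \in \mathfrak{X}^{Vp_H}(W_H)$, the strategy is to work directly in the adapted coordinates $(x^i, u^\alpha, u^\alpha_i, z^i, p_\alpha^i, \mu)$ on $W_H$, in which $\Theta_H$ has the explicit local expression already displayed. The $p_H$-vertical tangent bundle is spanned over $C^\infty(W_H)$ by the coordinate fields $\partial_\mu$, $\partial_{z^j}$, $\partial_{p_\alpha^i}$, $\partial_{u^\alpha_k}$ and $\partial_{u^\alpha}$, so the variational equations are equivalent to the vanishing of $\sigma^*(Z \lrcorner d\Theta_H)$ on these five generators; the general case follows by $C^\infty(M)$-linearity after pullback.

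The first step is to compute $d\Theta_H$ explicitly. Using $d\eta = d\eta_i = 0$ together with graded Leibniz, one obtains
\begin{equation*}
d\Theta_H = dp_\alpha^i\wedge(du^\alpha - u^\alpha_k dx^k)\wedge\eta_i - p_\alpha^i\, du^\alpha_k\wedge dx^k\wedge\eta_i + d\mu\wedge(dz^i\wedge\eta_i - L\eta) - \mu\, dL\wedge\eta.
\end{equation*}
From here I would take each generator in turn and use the interior product, relying on the standard identity $dx^k\wedge\eta_i = \delta^k_i\,\eta$ and the formula $X\lrcorner(\alpha\wedge\beta)=(X\lrcorner\alpha)\wedge\beta+(-1)^{|\alpha|}\alpha\wedge(X\lrcorner\beta)$ to rewrite each contraction as a top-degree form on $M$ after pullback.

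Contraction with $\partial_\mu$ reduces $d\Theta_H$ to $dz^i\wedge\eta_i - L\eta$, yielding the Herglotz admissibility equation. Contraction with $\partial_{z^j}$ affects only the $d\mu\wedge dz^i\wedge\eta_i$ term and the $z$-dependence of $L$ in $\mu\, dL\wedge\eta$, yielding $d\mu\wedge\eta_j + \mu(\partial L/\partial z^j)\eta = 0$. Contraction with $\partial_{p_\alpha^i}$ leaves $(du^\alpha - u^\alpha_k dx^k)\wedge\eta_i$, which, using $u^\alpha_k dx^k\wedge\eta_i = u^\alpha_i\eta$, is precisely $du^\alpha\wedge\eta_i - u^\alpha_i\eta = 0$. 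Contraction with $\partial_{u^\alpha_k}$ reaches only the middle term and the $u^\alpha_k$-dependence of $L$, and after simplification via the $\delta^k_i$ identity produces the algebraic constraint between $p_\alpha^k$ and $\mu\,\partial L/\partial u^\alpha_k$. Finally, contraction with $\partial_{u^\alpha}$ reaches the $dp_\alpha^i\wedge du^\alpha\wedge\eta_i$ piece and the $u^\alpha$-dependence of $L$, and after pullback yields the momentum evolution equation $dp_\alpha^i\wedge\eta_i - \mu(\partial L/\partial u^\alpha)\eta = 0$. Assembling these five pullbacks gives the ``only if'' direction; the ``if'' direction is immediate because each displayed equation is exactly $\sigma^*$ of a contraction, so the five together imply $\sigma^*(Z\lrcorner d\Theta_H)=0$ for every linear combination of the coordinate fields.

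The main obstacle is not conceptual but purely bookkeeping: keeping the signs from the graded Leibniz rule consistent and reducing products of $dx^k$ with $\eta_i$ and $\eta$ correctly. One small subtlety worth flagging is that $d\Theta_H$ contains the term $\mu\, dL\wedge\eta$, whose $dL$ is evaluated on $J^1\pi$, so $\partial_{u^\alpha_k}$- and $\partial_{u^\alpha}$-contractions pick up $\mu\,\partial L/\partial u^\alpha_k$ and $\mu\,\partial L/\partial u^\alpha$ respectively; these are the only places where the Lagrange multiplier $\mu$ appears multiplicatively with derivatives of $L$, which is what gives the vakonomic Herglotz character to the resulting system.
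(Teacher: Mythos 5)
Your strategy is the right one and is, in effect, the only proof on offer: the paper states this theorem without proof (deferring to \cite{gaset2022herglotz}), and the intended argument is exactly the coordinate computation you describe --- compute $d\Theta_H$, contract with the coordinate generators $\partial_\mu,\partial_{z^j},\partial_{p^i_\alpha},\partial_{u^\alpha_k},\partial_{u^\alpha}$ of $\mathfrak{X}^{Vp_H}(W_H)$, pull back along $\sigma$, and simplify with $dx^k\wedge\eta_i=\delta^k_i\,\eta$; tensoriality of $Z\mapsto\sigma^*(Z\lrcorner d\Theta_H)$ then gives both directions of the equivalence. Your formula for $d\Theta_H$ is correct for the displayed local expression of $\Theta_H$ (read with the evidently missing $\wedge\,\eta_i$ on the $p^i_\alpha$-term, and with $L$ allowed to depend on $z^i$, as the statement of the theorem requires), and the contractions with $\partial_\mu$, $\partial_{z^j}$ and $\partial_{p^i_\alpha}$ do produce the second, first and third displayed equations.

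There is, however, one concrete point where your stated conclusions do not follow from your own $d\Theta_H$. With the $\mu$-term taken as $\mu\left(dz^i\wedge\eta_i-L\eta\right)$, the contractions with $\partial_{u^\alpha}$ and $\partial_{u^\alpha_k}$ give
\[
-\left(dp^i_\alpha\wedge\eta_i+\mu\frac{\partial L}{\partial u^\alpha}\eta\right)
\qquad\text{and}\qquad
-\left(p^k_\alpha+\mu\frac{\partial L}{\partial u^\alpha_k}\right)\eta,
\]
so the relative sign between the momentum term and the $\mu\,\partial L$ term is $+$, not the $-$ of the statement. The reason these two equations (and only these) are affected is that they mix a contribution from the contact part $p^i_\alpha\theta^\alpha\wedge\eta_i$, whose sign is fixed, with one from the $\mu$-part, whose sign depends on the convention, whereas the $z^j$-equation draws both of its terms from the $\mu$-part and is insensitive to the choice. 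The theorem as stated, together with the later formulas $P_0=\{p^i_\alpha=\mu\,\partial L/\partial u^\alpha_i\}$, $H_0=p^i_\alpha u^\alpha_i-\mu L$ and the coefficient $(1-\mu)$ in $i_0^*\Theta_H$, is consistent with the opposite convention $\mu\left(L\eta-dz^i\wedge\eta_i\right)=\mu\Phi$; so this is an internal sign inconsistency of the paper rather than a defect of your method, and the two systems are carried into one another by the relabeling $\mu\mapsto-\mu$. You should either adopt the convention $W_H=d\Theta_{m-1}+\{p^i_\alpha\theta^\alpha\wedge\eta_i+\mu\Phi\}$ from the start, or record the equations you actually obtain and note the equivalence --- but do not assert that the signs of the statement drop out of the $d\Theta_H$ you wrote, because they do not.
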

\end{example}

\subsection{Hamilton equations for general variational problems}
\label{sec:hamilt-equat-gener}

The first obstruction when dealing with general variational problems is the Hamiltonian section, as it plays a central role in the construction of the form $\Omega_h$. It is evident that this problem is related with the fact that the set $W_\lambda$ is akin to the first constraint submanifold $W_0$ of the usual approach to the unified formalism \cite{Prieto-Martinez2015203}, and this submanifold projects onto the restricted multimomentum bundle $J^{k+1}\pi^\ddagger$; therefore, as we do not have a natural candidate in the generalized context for the extended multimomentum bundle, it becomes problematic to define a meaningful substitute for the Hamiltonian section. For this reason, we will suppose here that we have a map $p_\lambda:W_\lambda\to W^\ddagger$ analogous to the map $p_\cL^\ddagger$, and we will try to formulate the Hamilton equations using only this structure.

\subsubsection{Hamiltonian form from $p_\lambda$}
\label{sec:hamilt-form-from}

The main property of the Hamiltonian form regarding the equations of motion is Corollary \ref{cor:Hamilt-form-2}. In order to proceed, it is necessary to ensure that the equations defining $P_0$ exists. Thus, let us consider the following definition.

\begin{defc}[Compatibility between a form and a projection]\label{def:CompatibilityFormProjection}
  We will say that $p_\lambda:W\to W^\ddagger$ and $\Theta_\lambda$ are \emph{compatible} if and only if
  \[
    Y\lrcorner\left(\cL_Z\Theta_\lambda\right)=0,\qquad Z\lrcorner\Theta_\lambda=0,
  \]
  for all $Y\in\mathfrak{X}^{V\left(\pi\circ\tau_\lambda\right)}\left(W_\lambda\right)$ and $Z\in\mathfrak{X}^{Vp_\lambda}\left(W_\lambda\right)$.
\end{defc}

\begin{note}[The compatibility condition in local coordinates]
  Let us suppose that $W^\ddagger$ is fibred on $M$, and $\left(x^i,z^A,w^P\right)$ is a set of local coordinates adapted to the projections $p_\lambda:W\to W^\ddagger$ and $W^\ddagger\to M$, in such a way that
  \[
    p_\lambda\left(x^i,z^A,w^P\right)=\left(x^i,w^P\right),
  \]
  and so
  \[
    Vp_\lambda=\left<\frac{\partial}{\partial z^A}\right>,\qquad V\left(\pi\circ\tau_\lambda\right)=\left<\frac{\partial}{\partial z^A},\frac{\partial}{\partial w^P}\right>.
  \]
  Then we can write
  \begin{multline*}
    \Theta_\lambda=\sum_{p+r+s=m}F_{i_1,\cdots,i_p,A_1,\cdots,A_r,P_1,\cdots,P_s}\cdot\\
    \cdot dx^{i_1}\wedge\cdots\wedge dx^{i_p}\wedge dz^{A_1}\wedge\cdots\wedge dz^{A_r}\wedge dw^{P_1}\wedge\cdots\wedge dw^{P_s},
  \end{multline*}
  and the conditions
  \[
    Z\lrcorner\Theta_\lambda=0
  \]
  translate into
  \[
    \Theta_\lambda=\sum_{p+r=m}F^{i_1,\cdots,i_p}_{P_1,\cdots,P_r}\eta_{i_1,\cdots,i_p}\wedge dw^{P_1}\wedge\cdots\wedge dw^{P_s}%+\sum_{p=1}^{m-1}F_{A}^{i_1,\cdots,i_p}dz^A\wedge \eta_{i_1,\cdots,i_p},
  \]
  and so
  \[
    \cL_{\partial/\partial z^B}\Theta_\lambda=\sum_{p+r=m}\frac{\partial F^{i_1,\cdots,i_p}_{P_1,\cdots,P_r}}{\partial z^B}\eta_{i_1,\cdots,i_p}\wedge dw^{P_1}\wedge\cdots\wedge dw^{P_s}%+\\
    %+\sum_{p=1}^{m-1}\frac{\partial F_{A}^{i_1,\cdots,i_p}}{\partial z^B}dz^A\wedge\eta_{i_1,\cdots,i_p};
  \]
  therefore, the conditions $Y\lrcorner\left(\cL_Z\Theta_\lambda\right)=0$ become
  \begin{align*}
    \frac{\partial F^{i_1,\cdots,i_p}_{P_1,\cdots,P_r}}{\partial z^B}&=0,\qquad p<m,p+r=m.%\\
%    \frac{\partial F_{A}^{i_1,\cdots,i_p}}{\partial z^B}&=0.
  \end{align*}
  For example, using the local expressions found in Section \ref{sec:local-expressions} for the unified formalism associated to the classical variational problem for a Lagrangian density $\cL$, Eq. \eqref{eq:LocalThetaL} can be written as
  \[
    \Theta_\cL=\left({L}-p^{I,i}_\alpha{u}_{I+1_j}^\alpha\right)\eta+\sum_{i=1}^m\sum_{0\leq\abs{I}\leq k}p_\alpha^{I,i}du^\alpha_I\wedge\eta_i,
  \]
  and these conditions are met, proving that $p^\ddagger_\cL:\widetilde{W}_\cL\to J^{k+1}\pi^\ddagger$ and the canonical form $\Theta_\cL$ are in this case compatible. 
\end{note}

Intuitively, it amounts for the form $\Theta_\lambda$ to be $p_\lambda$-semibasic, and the forms $\cL_Z\Theta_\lambda,Z\in\mathfrak{X}^{Vp_\lambda}\left(W_\lambda\right)$ to be horizontal forms with respect to the projection onto the base manifold $M$. Therefore, we can introduce the following definition, analogous to Definition \ref{def:FirstConstraintClass}.

\begin{defc}\label{def:FirstConstraint}
  Suppose that $p_\lambda$ and $\Theta_\lambda$ are compatible. The \emph{first constraint manifold} $P_0\subset W_\lambda$ is the set
  \[
    P_0:=\left\{\rho\in W_\lambda:\left(\cL_Z\Theta_\lambda\right)\left(\rho\right)=0\text{ and }\left(Z\lrcorner\Theta_\lambda\right)\left(\rho\right)=0\text{ for all }Z\in\mathfrak{X}^{Vp_\lambda}\left(W_\lambda\right)\right\}.
  \]
\end{defc}

\begin{note}\label{rem:AnalogousDefinition}
  Please note that this definition is indeed analogous to Definition \eqref{def:FirstConstraintClass}. Also, it can be said that $P_0$ is the biggest set in $W_\lambda$ where $\Theta_\lambda$ is $p_\lambda$-basic.
\end{note}

Let us indicate with $i_0:P_0\hookrightarrow W_\lambda$ the canonical immersion; define the $m$-form
\[
  \Theta_0:=i_0^*\Theta_\lambda\in\Omega^m\left(P_0\right).
\]
Let us suppose further that the set $C_0:=p_\lambda\left(P_0\right)\subset W^\ddagger$ is a submanifold; let us indicate with $p_0:P_0\to C_0$ the restriction of the map $p_\lambda$ to $P_0$. Then we have the following result.
\begin{lem}\label{lem:HamiltonFormProp}
  There exists $\Theta_h\in\Omega^m\left(C_0\right)$ such that
  \[
    p_0^*\Theta_h=\Theta_0.
  \]
\end{lem}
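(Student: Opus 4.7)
The plan is to identify $\Theta_h$ by descent along $p_0\colon P_0\to C_0$: I want to show that $\Theta_0=i_0^*\Theta_\lambda$ is a $p_0$-basic form, and then invoke the standard criterion that a differential form on the total space of a surjective submersion is the pullback of a form on the base if and only if it is both horizontal and invariant under vertical vector fields. The map $p_0$ is surjective by the very definition $C_0:=p_\lambda(P_0)$, and is a submersion onto its image by the working hypothesis that $C_0$ is a submanifold; so the descent framework applies, and the two conditions of Definition \ref{def:FirstConstraint} are tailored to produce, respectively, horizontality and invariance of $\Theta_0$. This is in direct analogy with Lemma \ref{cor:Hamilt-form-2} in the classical setting.

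For horizontality, I would fix $\rho\in P_0$ and a $p_0$-vertical vector $z\in T_\rho P_0$. Since $p_0=p_\lambda\circ i_0$, the pushforward $Ti_0(z)$ is $p_\lambda$-vertical at $i_0(\rho)$, so it extends locally to a vector field $Z\in\mathfrak{X}^{Vp_\lambda}(W_\lambda)$ agreeing with $Ti_0(z)$ at $i_0(\rho)$. The first of the two conditions in Definition \ref{def:FirstConstraint}, namely $(Z\lrcorner\Theta_\lambda)(i_0(\rho))=0$, then pulls back through $i_0$ to $(z\lrcorner\Theta_0)(\rho)=0$, which is what horizontality demands.

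For invariance, I would start with a $p_0$-vertical vector field $Z_0$ on $P_0$, extend it in a neighborhood of $P_0$ in $W_\lambda$ to an $i_0$-related $p_\lambda$-vertical vector field $Z$ (so that $Z\circ i_0=Ti_0\circ Z_0$), and use the naturality identity $i_0^*\cL_Z=\cL_{Z_0}i_0^*$ to write $\cL_{Z_0}\Theta_0=i_0^*(\cL_Z\Theta_\lambda)$. By the second condition in Definition \ref{def:FirstConstraint}, $\cL_Z\Theta_\lambda$ vanishes pointwise on $P_0$, hence the right-hand side is zero and $\cL_{Z_0}\Theta_0=0$. Combining horizontality and invariance, the standard descent lemma yields the desired unique $\Theta_h\in\Omega^m(C_0)$ with $p_0^*\Theta_h=\Theta_0$.

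The main obstacle I anticipate is the invariance step, because the Lie derivative is a differential, not algebraic, operator, and a pointwise vanishing condition along $P_0$ does not in general propagate to a Lie derivative identity on $W_\lambda$. The way out is that we only need the equality \emph{after} pullback along $i_0$, and the existence of an $i_0$-related $p_\lambda$-vertical extension of $Z_0$ is a routine manifold-theoretic fact once one observes that $V_\rho p_0\subseteq V_\rho p_\lambda\cap T_\rho P_0$ for every $\rho\in P_0$. Independence of the chosen extension is automatic: any two such extensions differ by a field whose contribution to $i_0^*(\cL_\bullet\Theta_\lambda)$ vanishes by the same pointwise argument, so the descended form $\Theta_h$ is unambiguously defined.
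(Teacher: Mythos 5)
Your proposal is correct and follows essentially the same route as the paper: both arguments show that $\Theta_0=i_0^*\Theta_\lambda$ is $p_0$-basic by deriving horizontality from the condition $Z\lrcorner\Theta_\lambda=0$ and invariance from $\cL_Z\Theta_\lambda=0$ in Definition~\ref{def:FirstConstraint}, and then descend along the surjective submersion $p_0$. Your extra care in extending a $p_0$-vertical field on $P_0$ to an $i_0$-related $p_\lambda$-vertical field on $W_\lambda$ before applying the naturality of the Lie derivative is a point the paper's proof glosses over (it writes $\cL_{Ti_0\circ Z}\Theta_\lambda$ for a vector field defined only along $i_0$), so your version is, if anything, slightly more complete.
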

\begin{proof}
  We need to prove that $\Theta_0$ is a $p_0$-basic form. According to Definition \ref{def:FirstConstraint}, we know already that
  \[
    Z\lrcorner\Theta_0=0\qquad\text{ for all }Z\in\mathfrak{X}^{Vp_0}\left(P_0\right);
  \]
  so we also have to prove that
  \[
    \cL_Z\Theta_0=0.
  \]
  But for $Z\in\mathfrak{X}^{Vp_0}\left(P_0\right)$ we have that the image of
  \[
    Z_0:=Ti_0\circ Z
  \]
  belongs to $Vp_\lambda$, because
  \[
    Tp_\lambda\circ Z_0=Tp_\lambda\circ Ti_0\circ Z=Tp_0\circ Z=0.
  \]
  Therefore, $\cL_{Z_0}\Theta_\lambda=0$ and so
  \[
    0=i_0^*\left(\cL_{Z_0}\Theta_\lambda\right)=\cL_Z\Theta_0
  \]
  as required.
\end{proof}

\begin{defc}[Hamiltonian form]\label{def:GeneralHamForm}
  The form $\Theta_h\in\Omega^m\left(C_0\right)$ will be called \emph{Hamiltonian form} associated to the variational problem $\left(\pi:W\to M,\lambda,\cI\right)$ and the projection $p_\lambda$. 
\end{defc}

\begin{note}
  In line with Remark \ref{rem:AnalogousDefinition}, the Hamiltonian form is nothing but the form on $C_0$ whose pullback along $p_0$ gives rise to $\Theta_0$.
\end{note}

\subsubsection{Hamilton equations for general variational problems}
\label{sec:hamilt-equat-gener-1}

Using the notation of the previous section, Suppose that $q:W^\ddagger\to M$ gives $W^\ddagger$ a fiber bundle structure on $M$ such that the next diagram commutes
\[
  \begin{tikzcd}[row sep=1.8cm,column sep=2.3cm,ampersand replacement=\&]
    W_\lambda
    \arrow{r}{p_\lambda}
    \arrow[swap]{dr}{\pi\circ\tau_\lambda}
    \&
    W^\ddagger
    \arrow[swap]{d}{q}
    \\
    \&
    M
  \end{tikzcd}        
\]
With the help of the Hamiltonian form $\Omega_h:=d\Theta_h$ we can set the \emph{Hamilton equations}\footnote{The Hamilton-Cartan equations for classical variational problems have the additional requirement that its sections should be holonomic; this condition cannot be reproduced in the general case, because we do not have a double fibration structure such as $J^{k+1}\pi\to J^k\pi\to M$ in the latter.}.
\begin{defc}[Hamilton equations for a general variational problem]\label{def:hamilt-equat-gener}
  Let $i:C\hookrightarrow W^\ddagger$ a subbundle. The \emph{Hamilton equations on $C$ for a (regular) general variational problem} $\left(W\to M,\lambda,\cI\right)$ is the set of equations given by
  \begin{equation}\label{eq:GeneralHamiltonEqs}
    \psi^*\left(X\lrcorner\left(i^*\Omega_h\right)\right)=0\qquad\forall X\in\mathfrak{X}\left(C\right).
  \end{equation}
\end{defc}
We will indicate with $H_C\subset\Gamma\left(\left.q\right|_C\right)$ the set of (local) solutions for these equations. Accordingly, $C'\subset C$ will be the set on which the solutions of the Hamilton equations should live; we will call \emph{solution set} to this set. We will say that $C$ is \emph{final} if for every $\rho\in C$ there exists an open set $U\subset M$ and a solution $\psi:U\to C$ for the Hamilton equations such that $\rho\in\mathop{\text{Im}}{\psi}$, namely, if $C=C'$.

\subsubsection{Hamiltonian field theory for (vakonomic) Herglotz variational problem}
\label{sec:vakon-hergl-vari}

Recall from Example \ref{def:UnifiedHerglotz} that the variational problem
\[
  V'_H:=\left(W_H\to M,\Theta_H,0\right)
\]
provide us with a unified formalism for Herglotz variational problem. The construction just described will allow us to formulate a Hamiltonina field theory for this variatinal problem. So, let us define
\[
  W^\ddagger:=\wedge^m_2\left(T^*P\right)\times_M\wedge^{m-1}\left(T^*M\right).
\]
Then we have the following projection map
  \[
    p_H^\ddagger:W_H\to W^\ddagger:\rho_{\left(j_x^1s,\sigma\right)}\mapsto\left(\alpha,\beta,\nu\right)
  \]
  if and only if $\beta=\sigma$ and
  \[
    \rho=\alpha\circ T_{\left(\alpha,\beta,\nu\right)}\left(\text{pr}_1\circ\pi_{10}\right)+\left(1-\nu\right)\left.d\Theta_{m-1}\right|_\beta\circ T_{\left(\alpha,\beta,\nu\right)}\text{pr}_2.
  \]
  In coordinates, if $\alpha=q\eta+q_\alpha^idu^\alpha\wedge\eta_i$ and $\beta=w^i\eta_i$, this map is defined by the equations
  \[
    q=\mu L-p_\alpha^iu^\alpha_i,\qquad q_\alpha^i=p_\alpha^i,\qquad\nu=\mu.
  \]
  We have that
    \[
      Vp_H^\ddagger=\left<\frac{\partial}{\partial u_i^\alpha}\right>
    \]
    and is can be seen that $p_H^\ddagger$ and $\Theta_H$ are compatible in the sense of Definition \ref{def:CompatibilityFormProjection}, and so it is possible to construct a Hamiltonian field theory from this unified formulation.

    So the first constraint submanifold is this case will become
    \[
      P_0=\left\{\left(x^i,u^\alpha,u_i^\alpha,p_\alpha^i,z^i,\mu\right)\in W^\dagger:p_\alpha^i=\mu\frac{\partial L}{\partial u_i^\alpha}\right\};
    \]
    the phase space for the Hamiltonian field theory can be obtained from projection of this set along $p_H^\ddagger$, namely 
    \begin{align*}
      C_0&:=p_H^\ddagger\left(P_0\right)\cr
      &=\left\{\left(x^i,u^\alpha,p_\alpha^i,z^i,\mu\right)\in W^\dagger:p_\alpha^i=\mu\frac{\partial L}{\partial u_i^\alpha}\right\}.
    \end{align*}
    In order to construct the Hamiltonian form, we must pull the $m$-form $\Theta_H$ back to $P_0$; it turns out that
    \[
      i_0^*\Theta_H=-H_0\eta+p_\alpha^idu^\alpha\wedge\eta_i+\left(1-\mu\right)dz^i\wedge\eta_i,
    \]
    where $H_0=p_\alpha^iu_i^\alpha-\mu L$ is the energy function on $P_0$. Therefore the Hamiltonian form will result
    \[
      \Theta_h=-\widetilde{H}_0\eta+p_\alpha^idu^\alpha\wedge\eta_i+\left(1-\mu\right)dz^i\wedge\eta_i
    \]
    where $\widetilde{H}_0$ is the function on $C_0$ induced by $H_0$.
    \begin{thm}[Hamiltonian field theory for Herglotz variational problem]
      The \emph{Hamiltonian field theory for the Herglotz variatinal problem} is the pair $\left(C_0,\Omega_h\right)$. In this setting the Hamilton equations become
    \begin{align*}
      &dp_\alpha^i\wedge\eta_i+\frac{\partial\widetilde{H}_1}{\partial u^\alpha}\eta=0,\quad du^\alpha\wedge\eta_i-\frac{\partial\widetilde{H}_1}{\partial p_\alpha^i}\eta=0,\\ &d\mu\wedge\eta_i-\frac{\partial\widetilde{H}_1}{\partial z^i}\eta=0,\quad dz-Ldt=0.
    \end{align*}
  \end{thm}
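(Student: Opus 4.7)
The plan is to unwind the Hamilton equations $\psi^*(X \lrcorner d\Theta_h) = 0$ of Definition \ref{def:hamilt-equat-gener} on $C_0$ and to read off, coefficient by coefficient, each of the four displayed equations. First I would compute
\[
d\Theta_h = -d\widetilde{H}_0 \wedge \eta + dp_\alpha^i \wedge du^\alpha \wedge \eta_i - d\mu \wedge dz^i \wedge \eta_i,
\]
using $d\eta = 0$ and $d\eta_i = 0$. Since $C_0$ carries adapted local coordinates $(x^i, u^\alpha, p_\alpha^i, z^j, \mu)$, it is enough to test against the vertical generators $\partial/\partial u^\alpha$, $\partial/\partial p_\alpha^i$, $\partial/\partial z^i$ and $\partial/\partial \mu$; contractions with $\partial/\partial x^i$ produce $(m-1)$-forms whose pullback is already implied by the rest.

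Next I would carry out the four interior products in turn. With $X = \partial/\partial u^\beta$ the contraction is $-(\partial\widetilde{H}_0/\partial u^\beta)\,\eta - dp_\beta^i \wedge \eta_i$, giving the first equation of the statement after pullback along $\psi$. Taking $X = \partial/\partial p_\beta^j$ produces $-(\partial\widetilde{H}_0/\partial p_\beta^j)\,\eta + du^\beta \wedge \eta_j$, i.e.\ the second equation. Setting $X = \partial/\partial z^i$ yields $-(\partial\widetilde{H}_0/\partial z^i)\,\eta + d\mu \wedge \eta_i$, matching the third equation (here $\widetilde{H}_1$ should be read as $\widetilde{H}_0$). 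Finally, $X = \partial/\partial \mu$ gives $-(\partial\widetilde{H}_0/\partial \mu)\,\eta - dz^i \wedge \eta_i$.

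The only step requiring genuine care is this last one: to recover the fourth equation $dz^i \wedge \eta_i - L\eta = 0$ (which reads $dz - L\,dt = 0$ in the one-independent-variable case) I need the envelope identity $\partial \widetilde{H}_0/\partial \mu = -L$ on $C_0$. Writing $\widetilde{H}_0 = p_\alpha^i U_i^\alpha(x,u,p/\mu) - \mu L(x,u,U)$, where $U_i^\alpha$ solves the implicit relation $p_\alpha^i = \mu\,\partial L/\partial u_i^\alpha$ defining $C_0$, a direct differentiation yields
\[
\frac{\partial \widetilde{H}_0}{\partial \mu} = -L + \Bigl(p_\alpha^i - \mu\,\frac{\partial L}{\partial u_i^\alpha}\Bigr)\,\frac{\partial U_i^\alpha}{\partial \mu} = -L,
\]
the parenthesis vanishing on $C_0$ by its very definition. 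I expect this Legendre-type identity to be the only substantive obstacle; once it is in hand, the four Hamilton equations in the statement follow immediately from the four interior-product computations above, and the theorem is established.
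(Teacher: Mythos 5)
Your proposal is correct and follows essentially the same route the paper (implicitly) takes: the paper has already computed $\Theta_h=-\widetilde{H}_0\eta+p_\alpha^i\,du^\alpha\wedge\eta_i+\left(1-\mu\right)dz^i\wedge\eta_i$ in the preceding paragraphs, and the stated equations are obtained exactly as you do, by contracting $d\Theta_h$ with the vertical coordinate fields on $C_0$ (horizontal contractions adding nothing new for sections, since $\psi^*d\Theta_h$ vanishes on an $m$-manifold). Your explicit check of the envelope identity $\partial\widetilde{H}_0/\partial\mu=-L$ on $C_0$ is the one step the paper leaves tacit and is precisely what turns the $\partial/\partial\mu$ contraction into $dz^i\wedge\eta_i-L\eta=0$; the remaining discrepancies you flag are typographical on the paper's side ($\widetilde{H}_1$ for $\widetilde{H}_0$, and the one-independent-variable notation $dz-L\,dt$), while your "$(m-1)$-forms" for the contractions of the $(m+1)$-form $d\Theta_h$ is a harmless slip for "$m$-forms".
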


    % Let $C\subset\wedge^m\left(T^*\left(\wedge^{m-1}\left(T^*M\right)\right)\right)$ be the line bundle generated by the $m$-form $d\Theta_{m-1}$ on $\wedge^{m-1}\left(T^*M\right)$; then we can define
% \[
%   W^\ddagger:=J^1\pi\times_P\left[\wedge^m_2\left(T^*P\right)\times_P\pi^*\left(C\right)\right].
% \]
% In local terms an element $w$ in $W^\ddagger$ is a pair
% \[
%   w=\left(j_x^1s,\rho'\right)
% \]
% where
% \[
%   \rho'=A\eta+C_\alpha^idu^\alpha\wedge\eta_i+Bdz^i\wedge\eta_i.
% \]
% Consider now the projection
% \[
%   p_H:W_H\to W^\ddagger:\rho\mapsto\left(j_x^1s,\rho'\right)
% \]
% if and only if $\rho=\rho'\circ T_{j_x^1s}\pi_{10}$

\section{Hamilton equations for general relativity with basis}
\label{sec:introduction}

In this section a novel multisymplectic Hamiltonian scheme for gravity with basis (in the full frame bundle, whose structure group is $GL\left(n\right)$) is developed using the method described above. The idea is to extract this formalism using the description for GR with basis from \cite{doi:10.1142/S0219887818500445}; we will borrow the notation from this article. This construction can be compared with the formulation devised in \cite{0264-9381-32-9-095005}, which requires the structure group of the underlying principal bundle to be the Lorentz group, and uses in its design a classical variational problem. Another construction of a Hamiltonian theory for a first order version of gravity, in this case metric-affine gravity, is presented in \cite{gaset19:_new_metric_affin_einst_palat}; it should be noted that some extra constraints found in this article are absent in our scheme, because the general nature of the variational problem used in building the Hamiltonian theory below allows us to avoid the appearance of these constraints. Another important characteristic of this approach is that no Legendre transformation is needed in order to build the Hamilton equations; on the downside, it is necessary for the construction to be able to find a projection compatible with the multisymplectic form of the unified problem.

\subsection{Introduction and basic notation}
\label{sec:intr-basic-notat}

Let
\[
  \tau:LM\to M
\]
be the frame bundle on the manifold $M$. Therefore, we can define an unified problem on
\[
  \widehat{W_\cL}=J^1\tau\times_{M} E_2
\]
where
\[
  E_2:=\wedge^{m-1}M\otimes S^*\left(m\right),\qquad S^*\left(m\right):=\left(\mR^m\right)^*\odot\left(\mR^m\right)^*.
\]
The Lagrangian form on this space is given by
\[
  \lambda_\cL:=\eta^{kl}\theta_{kp}\wedge\Omega^p_l+\eta^{ql}\Theta_{pq}\wedge\omega^p_l,
\]
where $\Theta$ is the $S^*\left(m\right)$-valued $\left(m-1\right)$-form on $E_2$, $\omega$ is the canonical connection form on $J^1\tau$ and $\Omega$ its curvature form; $\theta_{pqr\cdots}$ indicate the Spalding forms of different orders, and $\eta$ is a matrix defining a Lorentz group in $GL\left(m\right)$. Given the affine structure of the jet bundle, the vertical spaces $V_{j_x^1s}\tau_{10}\subset T_{j_x^1s}\left(J^1\tau\right)$ are spanned by the vectors
\[
  \left(\theta^r,\left(E^p_q\right)_{LM}\right)^V\left(j_x^1s\right):=\left.\frac{\text{d}}{\text{d}t}\right|_{t=0}\left[j_x^1s+t\left.\theta^r\right|_{s\left(x\right)}\otimes\left(E^p_q\right)_{LM}\left(s\left(x\right)\right)\right],
\]
and the map
\[
  j_x^1s\mapsto\left(\theta^r,\left(E^p_q\right)_{LM}\right)^V\left(j_x^1s\right)
\]
defines a $\tau_{10}$-vertical vector field on $J^1\tau$ (and also on $\widehat{W_\cL}$). Recall that in the coordinates $\left(x^i,e_j^\mu\right)$, the infinitesimal generators for the $GL\left(m\right)$-action become
\[
  \left(E_p^q\right)_{LM}\left(x^i,e_j^\mu\right):=-e_p^\sigma\frac{\partial}{\partial e^\sigma_q};
\]
it means that
\begin{align}
  j_x^1s+t\left.\theta^r\right|_{s\left(x\right)}\otimes\left(E^p_q\right)_{LM}\left(s\left(x\right)\right)&=\left[\frac{\partial}{\partial x^\mu}\mapsto\frac{\partial}{\partial x^\mu}+e^\nu_{k\mu}\frac{\partial}{\partial e^\nu_k}-te^r_\mu e^\nu_q\frac{\partial}{\partial e^\nu_{p}}\right]\cr
  &=\left[\frac{\partial}{\partial x^\mu}\mapsto\frac{\partial}{\partial x^\mu}+\left(e^\nu_{k\mu}-te^r_\mu e_q^\nu\delta^p_k\right)\frac{\partial}{\partial e^\nu_k}\right]\label{eq:AffineActionJet}
\end{align}
for
\[
  j_x^1s:\frac{\partial}{\partial x^\mu}\mapsto\frac{\partial}{\partial x^\mu}+e^\nu_{k\mu}\frac{\partial}{\partial e^\nu_k}.
\]
Therefore, in the canonical coordinates $\left(x^i,e^\mu_k,e^\mu_{k\nu}\right)$ these vertical vectors are given by the formula
\[
  \left(\theta^r,\left(E^p_q\right)_{LM}\right)^V\left(j_x^1s\right)=-e^r_\nu e^\mu_q\frac{\partial}{\partial e^\mu_{p\nu}}.
\]
\begin{lem}
  There exists an action of the abelian group (under addition) $A:=\mR^m\otimes\left(\mR^m\right)^*\otimes\left(\mR^m\right)^*$ on $J^1\tau$ given by the formula
  \[
    c\cdot j_x^1s:=j_x^1s+c^i_{jk}\left.\theta^j\right|_{s\left(x\right)}\otimes\left(E^k_i\right)_{LM}\left(s\left(x\right)\right)
  \]
  where $c=\left(c_{ij}^k\right)$.
\end{lem}
\begin{note}[The action in coordinates]
  Let us see how this action operates at coordinate level. By Equation \eqref{eq:AffineActionJet}, we will obtain
  \[
    c\cdot e^\nu_{k\mu}=e^\nu_{k\mu}-c_{rk}^qe^r_\mu e^\nu_q.
  \]
\end{note}
Please note that the Lie algebra $\mathfrak{a}$ coincides with the space $A$; the infinitesimal generators for this action are the vertical vector fields
\[
  c_{J^1\tau}:=c_{rp}^q\left(\theta^r,\left(E^p_q\right)_{LM}\right)^V,\qquad c\in\mathfrak{a}.
\]
Now, we are searching for a projection $p^\ddagger:\widehat{W_\cL}\to W^\ddagger$ compatible with the form $\lambda_\cL$; the idea is to construct this projection by quotienting out by a subspace $B\subset A$. In order to proceed, we need the following lemma.

\begin{lem}
  It is true that
  \[
    \left(\theta^r,\left(E^p_q\right)_{LM}\right)^V\lrcorner\omega=0=\left(\theta^r,\left(E^p_q\right)_{LM}\right)^V\lrcorner\theta=\left(\theta^r,\left(E^p_q\right)_{LM}\right)^V\lrcorner T.
  \]
  Also
  \[
    \left(\theta^r,\left(E^p_q\right)_{LM}\right)^V\lrcorner\Omega^i_j=-\delta^p_j\delta^i_q\theta^r.
  \]
\end{lem}
It is important to have at our disposal the following expression for the differential of the Lagrangian form.
\begin{prop}
  The differential of the Lagrangian form for GR is given by
  \begin{multline}\label{eq:FormulaFordLambda0}
    \left.d\lambda_\cL\right|_\rho=\left[2\eta^{kp}\left(\omega_\pf\right)_k^i\wedge\theta_{il}-\left(\omega_\pf\right)^s_s\wedge\eta^{kp}\theta_{kl}+\eta^{kp}T^i\wedge\theta_{kli}+\eta^{ip}\left.\Theta_{il}\right|_{\beta}\right]\wedge\Omega^l_p+\\
    +\eta^{ik}\left[\left.d\Theta_{ij}\right|_{\beta}+\eta^{rq}\eta_{li}\left.\Theta_{rj}\right|_{\beta}\wedge\left(\omega_\kf\right)^l_q-\left.\Theta_{ip}\right|_{\beta}\wedge\left(\omega_\kf\right)^p_j\right]\wedge\left(\omega_\pf\right)^j_k,
  \end{multline}
  where $\omega_\kf$ (resp. $\omega_\pf$) are $\eta$-antisymmetric (resp. $\eta$-symmetric) components of $\omega$.
\end{prop}

\subsection{Compatibility in the affine action}
\label{sec:comp-affine-acti}

We will use the affine action defined above in order to find a projection compatible with $\lambda_\cL$ and thus Hamiltonian equations. To this end, let us explore the conditions for compatibility for the form $\lambda_{\cL}$. We have the formulas
\begin{align}
  0=c_{J^1\tau}\lrcorner\lambda_{\cL}&=\left(-1\right)^{n+1}\eta^{kl}c_{rt}^s\delta^p_s\delta^t_l\theta_{kp}\wedge\theta^r\cr
                                   &=\left(-1\right)^{n+1}\eta^{kt}c_{rt}^s\theta_{ks}\wedge\theta^r\cr
                                   &=\eta^{kt}c_{rt}^s\left(\delta^r_k\theta_s-\delta^r_s\theta_s\right)\cr
                                   &=\left(\eta^{kt}c_{kt}^s-\eta^{st}c_{rt}^r\right)\theta_s\label{eq:CContractedLambda}
\end{align}
and
\begin{align}
  &c_{J^1\tau}\lrcorner d\lambda_{\cL}=\cr
  &=\left(-1\right)^{n+1}\left[2\eta^{kp}\left(\omega_\pf\right)_k^i\wedge\theta_{il}-\left(\omega_\pf\right)^s_s\wedge\eta^{kp}\theta_{kl}+\eta^{kp}T^i\wedge\theta_{kli}+\eta^{ip}\left.\Theta_{il}\right|_{\beta}\right]\wedge c_{rs}^t\delta^r_p\delta^l_t\theta^s.\label{eq:dLContractedC}
\end{align}
The last two terms give rise to horizontal forms, so they are admissible; on the other hand, the remaining terms should annihilate for compatibility. Then the element $c\in\mathfrak{a}=A$ must be chosen such that
\begin{align*}
  0&=\eta^{ks}c_{rs}^l\eta^{ks}\left[2\left(\omega_\pf\right)^i_k\wedge\theta_{is}\wedge\theta^r-\left(\omega_\pf\right)^i_i\wedge\theta_{kl}\wedge\theta^r\right]\\
   &=2\left(\eta^{ks}c_{rs}^r\theta_i-\eta^{ks}c_{is}^l\theta_l\right)\wedge\left(\omega_\pf\right)^i_k-\left(\eta^{ks}c_{rs}^r\theta_k-\eta^{ks}c_{ks}^l\theta_l\right)\wedge\left(\omega_\pf\right)^i_i.
\end{align*}
The second term in this equation is equivalent to Equation \eqref{eq:CContractedLambda}; therefore, $c$ should be selected in order to ensure the following equation holds
\[
  \left(\eta^{ks}c_{rs}^r\theta_i-\eta^{ks}c_{is}^l\theta_l\right)\wedge\left(\omega_\pf\right)^i_k=0.
\]
When contracting with an infinitesimal generator for the $GL\left(m\right)$-action on $J^1\tau$, the form $\omega_\pf$ gives an element in the subspace $\pf$, which means that the term between parenthesis must live in $\kf$. Therefore, this condition translates into 
\begin{equation}\label{eq:CompatibilityConds}
  c_{pq}^l+c_{qp}^l-c^r_{rq}\delta_p^l-c^r_{rp}\delta_q^l=0.
\end{equation}
Contracting it with $\eta^{pq}$, it reduces to Equation \eqref{eq:CContractedLambda}, so that this condition is sufficient to define the subspace $B\subset A$.
\begin{lem}\label{lem:SolCompatibility}
  The solutions of \eqref{eq:CompatibilityConds} are elements $c\in A$ of the form
  \[
    c_{pq}^r=b_p\delta_q^r+b_q\delta_p^r+a_{pq}^r
  \]
  where $b_p=\left(1/2\right)c_{sp}^s$ and the components $a_{pq}^r$ satisfy the conditions
  \[
    a_{pq}^r+a_{qp}^r=0,\qquad a_{rq}^r=\frac{1-n}{2}c_{rq}^r.
  \]
\end{lem}
\begin{proof}
  We split $c$ into symmetric and antisymmetric parts
  \[
    c_{pq}^r=s_{pq}^r+a_{pq}^r,\qquad s_{pq}^r=\frac{1}{2}\left(c_{pq}^r+c_{qp}^r\right),\quad a_{pq}^r=\frac{1}{2}\left(c_{pq}^r-c_{qp}^r\right).
  \]
  According to Equation \eqref{eq:CompatibilityConds}, we have that
  \[
    s_{pq}^r=\frac{1}{2}\left(c_{sp}^s\delta_q^r+c_{sq}^s\delta_p^r\right);
  \]
  therefore, from
  \[
    c_{tq}^t=s_{tq}^t+a_{tq}^t
  \]
  we obtain the equation
  \[
    c_{tq}^t=\frac{n+1}{2}c_{tq}^t+a_{tq}^t
  \]
  for the trace of the antisymmetric part.
\end{proof}
\begin{cor}\label{cor:SplittingSolution}
  Any element $e\in A$ can be written as
  \[
    e_{pq}^r=\frac{1}{2}\left(e_{pq}^r+e_{qp}^r\right)-\frac{1}{2}\left(e_{sp}^s\delta_q^r+e_{sq}^s\delta_p^r\right)+a_q\delta_p^r-a_p\delta_q^r+c_{pq}^r
  \]
  where $c$ is a solution of Equation \eqref{eq:CompatibilityConds} and $a_q$ is chosen to adjust the trace of the antisymmetric part of $c$, namely
  \[
    a_q=\frac{1}{2}e_{sq}^s+\frac{1}{2\left(n-1\right)}\left(e_{sq}^s-e_{qs}^s\right)=\frac{n}{2\left(n-1\right)}e_{sq}^s-\frac{1}{2\left(n-1\right)}e_{qs}^s.
  \]
\end{cor}

\subsection{The quotient bundle and its primary constraint submanifold}
\label{sec:quotient-bundle-its}

It is now necessary to define the bundle where the Hamilton equations should live; it will be done through quotient. After that, the primary constraint submanifold could be defined and so the Hamiltonian form.

\subsubsection{A projection compatible with $\lambda_\cL$}
\label{sec:proj-comp-with}

We will define
\[
  W^\ddagger:=\widehat{W_\cL}/B
\]
with canonical projection $p^\ddagger:=p^{\widehat{W}_\cL}_B$. We can use Corollary \ref{cor:SplittingSolution} in order to define coordinates on this quotient bundle. Using the definition
\[
  -e^\mu_qe^\nu_pe_\sigma^rM_{\mu\nu}^\sigma=\frac{1}{2}\left(e_{pq}^r+e_{qp}^r\right)-\frac{1}{2}\left(e_{sp}^s\delta_q^r+e_{sq}^s\delta_p^r\right)+a_q\delta_p^r-a_p\delta_q^r
\]
for the part in an element in $\mathfrak{a}$ complementary to $\mathfrak{b}$, and recalling that
\[
  e_{pq}^r=-e_p^\nu e_\mu^re_q^\sigma\Gamma_{\sigma\nu}^\mu,
\]
we obtain
\[
  M_{\mu\nu}^\sigma=\frac{1}{2}\left(\Gamma_{\mu\nu}^\sigma+\Gamma_{\nu\mu}^\sigma\right)+\frac{1}{2\left(n-1\right)}\left(\Gamma_{\rho\nu}^\rho-\Gamma_{\nu\rho}^\rho\right)\delta_\mu^\sigma+\frac{1}{2\left(n-1\right)}\left(\Gamma_{\rho\mu}^\rho-\Gamma_{\mu\rho}^\rho\right)\delta_\nu^\sigma.
\]
Therefore we can consider $\left(x^\mu,e^\mu_k,M_{\mu\nu}^\sigma,\beta_{pq}^r\right)$ as a set of coordinates on the quotient bundle $W^\ddagger$; we need to know the primary constraint submanifold in order to construct the Hamiltonian form associated to this projection.

\subsubsection{On the geometrical interpretation of the phase space for Hamilton equations}
\label{sec:geom-interpr-hamilt}

In the discussion we will carry out in the next sections, it will be assumed that the $M$-variables in the phase space $W^\ddagger$ correspond to connection symbols for some affine connection on the spacetime $M$. It seems necessary to explain why it is possible to do so; after all, the phase space was created through a quotient procedure, and any individual with experience in this kind of operations knows that it tends to obliterate any structure that the original space possesses. The key is to interpret geometrically the action of the group $B$; in order to proceed in this direction, the following concept could be helpful \cite{MR532831}.

\begin{defc}[Projectively equivalent connections]
  Connections $\nabla,\nabla'$ on a manifold $P$ are \emph{projectively equivalent} if and only if they share the same unparametrized geodesics.
\end{defc}
In this vein, the following result is useful in characterizing projectively equivalent connections.
\begin{prop}
  $\nabla,\nabla'$ connections on $P$ are projectively equivalent if and only if there exists a $1$-form $\omega$ on $P$ such that
  \[
    \nabla'_XY=\nabla_XY+\omega\left(X\right)Y+\omega\left(Y\right)X
  \]
  for every $X,Y\in\mathfrak{X}\left(P\right)$.
\end{prop}
At Christoffel symbols level it means that
\[
  \left(\Gamma'\right)_{ij}^k=\Gamma_{ij}^k+b_i\delta_j^k+b_j\delta_i^k,
\]
where $b_i$ are the components of the $1$-form $\omega$ in the chosen coordinates. So, essentially, each class in $W^\ddagger$ represents a connection up to a torsion and a reparametrization of its geodesics. But it should be kept in mind that the trace of the antisymmetric part of any element in $B$ is fixed by the form associated to the reparametrization part. It is this fact that fixes the right parametrization for a geodesic once the torsion vanishes (it is ultimately obtained from the Hamilton equations, see Theorem \ref{thm:HamiltonEqsGR}).

\subsubsection{The primary constraint submanifold}
\label{sec:prim-constr-subm}

As we know, the vector subbundle $Vp^\ddagger$ is spanned by elements of the form $c_{J^1\tau}$, with $c\in B$. From Equation \eqref{eq:CContractedLambda} we have that there are no constraints associated to the equation
\[
  Z\lrcorner\lambda_\cL=0.
\]
From Equation \eqref{eq:dLContractedC} the other constraints become
\begin{align*}
  0=&\left(\eta^{kp}T^i\wedge\theta_{kli}+\eta^{ip}\left.\Theta_{il}\right|_{\beta}\right)\wedge c_{rs}^t\delta^r_p\delta^l_t\theta^s\\
  &=c_{rs}^t\left(\eta^{kr}T^i\wedge\theta_{kti}+\eta^{ir}\beta_{it}^q\theta_q\right)\wedge\theta^s\\
  &=\left(-1\right)^{n+1}c_{rs}^t\left[\eta^{kr}T^i_{pq}\theta^p\wedge\theta^q\wedge\left(\theta_{kt}\delta_i^s-\theta_{ki}\delta_t^s+\theta_{ti}\delta_k^s\right)+\eta^{ir}\beta_{it}^q\delta_q^s\sigma_0\right]
\end{align*}
where we have written
\[
 T^i:=T^i_{pq}\theta^p\wedge\theta^q.
\]
But now
\begin{align*}
  \theta^p\wedge\theta^q\wedge&\left(\theta_{kt}\delta_i^s-\theta_{ki}\delta_t^s+\theta_{ti}\delta_k^s\right)=\\
                              &=\theta^p\wedge\left[\left(\delta_t^q\theta_{k}-\delta_k^q\theta_t\right)\delta_i^s-\left(\delta_i^q\theta_{k}-\delta_k^q\theta_i\right)\delta_t^s+\left(\delta_i^q\theta_{t}-\delta_t^q\theta_i\right)\delta_k^s\right]\\  &=\left(\delta_t^q\delta^p_{k}-\delta_k^q\delta_t^p\right)\delta_i^s-\left(\delta_i^q\delta^p_{k}-\delta_k^q\delta^p_i\right)\delta_t^s+\left(\delta_i^q\delta^p_{t}-\delta_t^q\delta^p_i\right)\delta_k^s.
\end{align*}
Replacing in the previous equation and taking into account the antisymmetry of $T_{pq}^i$, we obtain
\begin{align*}
  0&=c_{rs}^t\left[2\eta^{kr}T^i_{pq}\left(\delta_t^q\delta_k^p\delta_i^s-\delta_i^q\delta_k^p\delta_t^s+\delta_i^q\delta_t^p\delta_k^s\right)+\eta^{ir}\beta_{it}^q\right]\\
  &=\eta^{kr}\left(2c_{rs}^tT_{kt}^s-2c_{rt}^tT_{ki}^i+2c_{rk}^tT_{ti}^i+c_{rs}^t\beta_{kt}^s\right)\\
   &=2\eta^{kr}\left(-\eta^{kr}c_{rt}^t+\eta^{rl}c_{rl}^k\right)T_{ki}^i+\eta^{kr}\left(2c_{rs}^tT_{kt}^s+c_{rs}^t\beta_{kt}^s\right)\\
  &=\eta^{kr}c_{rs}^t\left(2T_{kt}^s+\beta_{kt}^s\right)
\end{align*}
where the first term annihilates because $c\in B$.

\begin{prop}\label{prop:PrimaryConstraint}
  An element $\left(x^\mu,e^\mu_k,e_{k\nu}^\sigma,\beta_{pq}^r\right)$ belongs to the primary constraint submanifold if and only if
  \[
    \eta^{kr}c_{rs}^t\left(2T_{kt}^s+\beta_{kt}^s\right)=0
  \]
  for every $c\in B$.
\end{prop}
\begin{note}
  We will indicate the primary constraint submanifold with the symbol
  \[
    W_0:=\left\{\left(x^\mu,e^\mu_k,e_{k\nu}^\sigma,\beta_{pq}^r\right)\in\widehat{W_\cL}:\eta^{kr}c_{rs}^t\left(2T_{kt}^s+\beta_{kt}^s\right)=0\right\},
  \]
  and we will use $p_0:W_0\to W^\ddagger$ for the restriction of the projection $p^\ddagger$ to this submanifold.
\end{note}
There are some important properties of the elements of the primary constraint submanifold.
\begin{prop}
  If $\left(x^\mu,e^\mu_k,M_{\mu\nu}^\sigma,\beta_{pq}^r\right)$ is an element of the primary constraint submanifold, then
  \[
    \eta^{kr}\beta_{kr}^s=0=T_{kt}^t=\beta_{kt}^t.
  \]
\end{prop}
\begin{proof}
  We will try with different elements in $B$ in order to obtain constraints on $\beta$ and $T$. For example, consider
  \[
    c^t_{rs}=b_r\delta_s^t+b_s\delta_r^t+a_r\delta_s^t-a_s\delta_r^t
  \]
  where $b_r=(1/2)c^s_{rs}$ and $a_r$ is chosen to ensure that $c$ has the right trace properties, namely
  \begin{align*}
    c_{rs}^s&=\left(n+1\right)b_r+\left(n-1\right)a_r\\
            &=\frac{n+1}{2}c_{rs}^s+\left(n-1\right)a_r,
  \end{align*}
  implying that
  \[
    a_r=-\left(1/2\right)c_{rs}^s.
  \]
  Using it in Proposition \ref{prop:PrimaryConstraint} gives us
  \begin{align*}
    0&=\eta^{kr}\left(b_r\delta_s^t+b_s\delta_r^t+a_r\delta_s^t-a_s\delta_r^t\right)\left(2T_{kt}^s+\beta_{kt}^s\right)\\
     &=\eta^{kr}b_r\left(2T_{ks}^s+\beta_{ks}^s\right)+b_s\eta^{kr}\left(2T_{kr}^s+\beta_{kr}^s\right)+\eta^{kr}a_r\left(2T_{ks}^s+\beta_{ks}^s\right)-a_s\eta^{kr}\left(2T_{kr}^s+\beta_{kr}^s\right)\\
     &=\eta^{kr}\left(2T_{ks}^s+\beta_{ks}^s\right)\left(a_r+b_r\right)+\eta^{kr}\beta_{kr}^s\left(b_s-a_s\right),
  \end{align*}
  namely,
  \begin{equation}\label{eq:FirstEqFirstConstraintManifold}
    \eta^{kr}\beta_{kr}^s=0.
  \end{equation}
  Also, any element of the form $a_{pq}^r$ with the properties
  \[
    a_{pq}^r+a_{qp}^r,\qquad a_{pr}^r=0
  \]
  belongs to $B$; as a particular element with these properties, we can use
  \[
    a_{rs}^t=m_{rs}b^t+a_r\delta_s^t-a_s\delta_r^t
  \]
  with $m_{pq}+m_{qp}=0$ and $a_s$ is chosen to ensure this element has zero trace, so that
  \[
    0=m_{rs}b^s+\left(n-1\right)a_r.
  \]
  Then
  \begin{align*}
    0&=\eta^{kr}\left(m_{rs}b^t+a_r\delta_s^t-a_s\delta_r^t\right)\left(2T_{kt}^s+\beta_{kt}^s\right)\\
     &=\eta^{kr}\left[m_{rs}\left(2T_{kt}^s+\beta_{kt}^s\right)-\frac{1}{n-1}m_{rt}\left(2T_{ks}^s+\beta_{ks}^s\right)+\frac{1}{n-1}m_{st}\left(2T_{kr}^s+\beta_{kr}^s\right)\right]b^t,
  \end{align*}
  and it is true for any $b^t$, we obtain the identity
  \[
    \eta^{kr}m_{rs}\left(2T_{kt}^s+\beta_{kt}^s\right)-\frac{1}{n-1}\eta^{kr}m_{rt}\left(2T_{ks}^s+\beta_{ks}^s\right)+\frac{1}{n-1}\eta^{kr}m_{st}\left(2T_{kr}^s+\beta_{kr}^s\right)=0.
  \]
  Using Equation \eqref{eq:FirstEqFirstConstraintManifold} we can get rid of the last term, obtaining
  \[
    m_{rs}\left[\eta^{kr}\left(2T_{kt}^s+\beta_{kt}^s\right)-\frac{1}{n-1}\eta^{kr}\delta^s_t\left(2T_{kp}^p+\beta_{kp}^p\right)\right]=0.
  \]
  Because $m$ is an arbitrary antisymmetric matrix, it implies that
  \begin{multline}\label{eq:BetaTPrimary}
    \eta^{kr}\left(2T_{kt}^s+\beta_{kt}^s\right)+\eta^{ks}\left(2T_{kt}^r+\beta_{kt}^r\right)-\frac{1}{n-1}\left(\eta^{kr}\delta^s_t+\eta^{ks}\delta^r_t\right)\left(2T_{kp}^p+\beta_{kp}^p\right)=0;
  \end{multline}
  Making $r=t$ and summing over this index we obtain
  \[
    \eta^{ks}\left(2T_{kt}^t+\beta_{kt}^t\right)-\frac{1}{n-1}\eta^{ks}\left(n+1\right)\left(2T_{kp}^p+\beta_{kp}^p\right)=0,
  \]
  that is,
  \[
    2T_{kt}^t+\beta_{kt}^t=0.
  \]
  Also, multiplying each side of \eqref{eq:BetaTPrimary} by $\eta_{rs}$ and summing up every term, we obtain
  \[
    2\left(2T_{st}^s+\beta_{st}^s\right)-\frac{2}{n-1}\left(2T_{tp}^p+\beta_{tp}^p\right)=0.
  \]
  Therefore,
  \[
    T_{kt}^t=0=\beta_{kt}^t.\qedhere
  \]
\end{proof}
\begin{cor}
  When restricted to the primary constraint submanifold, the components of the torsion tensor are elements in $B$. Also, on this submanifold,
  \[
    M_{\mu\nu}^\sigma=\frac{1}{2}\left(\Gamma_{\mu\nu}^\sigma+\Gamma_{\nu\mu}^\sigma\right).
  \]
\end{cor}
\begin{proof}
  They are antisymmetric, and by the previous result, also traceless.
\end{proof}

\subsection{Hamiltonian form and Hamilton equations for gravity with basis}
\label{sec:hamilt-form-hamilt}

We have constructed a projection compatible with the Lagrangian form $\lambda_\cL$; according to the scheme developed previously, it will be possible to find a Hamiltonian form and so a set of Hamilton equations for GR with basis.

\subsubsection{The Hamiltonian form}
\label{sec:hamiltonian-form}

In order to proceed, let us introduce some notation; we will indicate with $L\subset GL\left(n\right)$ the Lorentz group defined by $\eta$. Then the $L$-invariance of the Lagrangian $\lambda_\cL$ allows us to prove the following result. Also, as always, let $d^nx$ be the canonical volume on $M$ associated with the coordinates $\left(x^\mu\right)$ and define
\[
  dx_{\mu_1\cdots\mu_k}:=\frac{\partial}{\partial x^{\mu_1}}\lrcorner\frac{\partial}{\partial x^{\mu_2}}\lrcorner\cdots\frac{\partial}{\partial x^{\mu_k}}\lrcorner d^nx.
\]

\begin{prop}[\cite{capriotti14:_differ_palat}]
  Let $\left(x^\mu,e_k^\nu,e^\sigma_{j\mu}\right)$ be the set of jet coordinates introduced above on $J^1\tau$. Then we can introduce a set of $L$-invariant functions $g^{\mu\nu}$ on $J^1\tau$ through
  \[
    {g}^{\mu\nu}:=\eta^{kl}e_k^\mu e_l^\nu.
  \]
  In terms of these functions the Lagrangian can be written as
  \begin{multline}
    {\lambda_{\cL}}=\sqrt{-\det{{g}}}{g}^{\kappa\phi}\left(\dif\Gamma^\gamma_{\rho\phi}\wedge\dif x^\rho+\Gamma^\sigma_{\delta\phi}\Gamma^\gamma_{\beta\sigma}\dif x^\beta\wedge\dif x^\delta\right)\wedge dx_{\gamma\kappa}+\\
    +\beta_{\mu\nu}^\rho\left(\dif {g}^{\mu\nu}+\left({g}^{\mu\sigma}{\Gamma}^\nu_{\gamma\sigma}+{g}^{\nu\sigma}{\Gamma}^\mu_{\gamma\sigma}\right)\dif x^\gamma\right)\wedge dx_\rho
  \end{multline}
\end{prop}

Let $i_0:W_0\hookrightarrow\widehat{W_\cL}$ be the canonical immersion and define
\[
  \lambda_0:=i_0^*\lambda_\cL;
\]
please note that
\[
  {\Gamma}^\sigma_{\mu\nu}=\frac{1}{2}\left({\Gamma}^\sigma_{\mu\nu}+{\Gamma}^\sigma_{\nu\mu}\right)+\frac{1}{2}\left({\Gamma}^\sigma_{\mu\nu}-{\Gamma}^\sigma_{\nu\mu}\right)=\frac{1}{2}\left({\Gamma}^\sigma_{\mu\nu}+{\Gamma}^\sigma_{\nu\mu}\right)+e^\sigma_ke^i_\mu e^j_\nu T_{ij}^k,
\]
and that on $W_0$ the matrix $T$ belongs to $B$. By the $B$-invariance of $\lambda_\cL$, it results that
\begin{multline*}
  \lambda_0=\frac{1}{2}\sqrt{-\det{{g}}}{g}^{\kappa\phi}\\
  \left[\dif\left(\Gamma^\gamma_{\rho\phi}+\Gamma^\gamma_{\phi\rho}\right)\wedge\dif x^\rho+\frac{1}{2}\left(\Gamma^\sigma_{\delta\phi}+\Gamma^\sigma_{\phi\delta}\right)\left(\Gamma^\gamma_{\beta\sigma}+\Gamma^\gamma_{\sigma\beta}\right)\dif x^\beta\wedge\dif x^\delta\right]\wedge dx_{\gamma\kappa}+\\
  +\beta_{\mu\nu}^\rho\left[\dif {g}^{\mu\nu}+\frac{1}{2}\left({g}^{\mu\sigma}\left({\Gamma}^\nu_{\gamma\sigma}+{\Gamma}^\nu_{\sigma\gamma}\right)+{g}^{\nu\sigma}\left({\Gamma}^\mu_{\gamma\sigma}+{\Gamma}^\mu_{\sigma\gamma}\right)\right)\dif x^\gamma\right]\wedge dx_\rho.
\end{multline*}
So there exists a $n$-form $\lambda_h$ on $W^\ddagger$, the \emph{Hamiltonian potential form}, which is given locally by
\begin{align*}
  \lambda_h=\sqrt{-\det{{g}}}{g}^{\kappa\phi}\left(\dif M^\gamma_{\rho\phi}\wedge\dif x^\rho+M^\sigma_{\delta\phi}M^\gamma_{\beta\sigma}\dif x^\beta\wedge\dif x^\delta\right)\wedge dx_{\gamma\kappa}+\\
  +\beta_{\mu\nu}^\rho\left[\dif {g}^{\mu\nu}+\left({g}^{\mu\sigma}{M}^\nu_{\gamma\sigma}+{g}^{\nu\sigma}{M}^\mu_{\gamma\sigma}\right)\dif x^\gamma\right]\wedge dx_\rho
\end{align*}
in the coordinates $\left(x^\mu,e^\mu_k,M_{\mu\nu}^\sigma,\beta_{pq}^r\right)$ on $W^\ddagger$, and such that
\[
  p_0^*\lambda_h=\lambda_0.
\]
Using the fact that
\[
  dx^\mu\wedge dx_\nu=\delta^\mu_\nu d^nx
\]
and
\[
  dx^\mu\wedge dx_{\nu\sigma}=\delta^\mu_\nu dx_\sigma-\delta^\mu_\sigma dx_\nu,
\]
we can simplify further the Hamiltonian potential form as
\begin{multline*}
  \lambda_h=\sqrt{-\det{{g}}}{g}^{\kappa\phi}\left[\dif M^\gamma_{\gamma\phi}\wedge dx_{\kappa}-\dif M^\gamma_{\kappa\phi}\wedge dx_{\gamma}+\left(M^\sigma_{\kappa\phi}M^\beta_{\beta\sigma}-M^\sigma_{\gamma\phi}M^\gamma_{\kappa\sigma}\right)d^nx\right]+\\
  +\beta_{\mu\nu}^\rho\left[\dif {g}^{\mu\nu}\wedge dx_\rho+2{g}^{\mu\sigma}{M}^\nu_{\rho\sigma}\dif^n x\right].
\end{multline*}
The Hamiltonian form is the differential of this form, namely
\begin{multline*}
  \Omega_h=\frac{1}{2}\sqrt{-\det{{g}}}{g}^{\kappa\phi}g_{\omega\eta}dg^{\omega\eta}\wedge\\
  \wedge\left[\dif M^\gamma_{\gamma\phi}\wedge dx_{\kappa}-\dif M^\gamma_{\kappa\phi}\wedge dx_{\gamma}-\left(M^\sigma_{\kappa\phi}M^\beta_{\beta\sigma}-M^\sigma_{\gamma\phi}M^\gamma_{\kappa\sigma}\right)d^nx\right]+\\
  +\sqrt{-\det{{g}}}d{g}^{\kappa\phi}\wedge\left[\dif M^\gamma_{\gamma\phi}\wedge dx_{\kappa}-\dif M^\gamma_{\kappa\phi}\wedge dx_{\gamma}-\left(M^\sigma_{\kappa\phi}M^\beta_{\beta\sigma}-M^\sigma_{\gamma\phi}M^\gamma_{\kappa\sigma}\right)d^nx\right]-\\
  -\sqrt{-\det{{g}}}{g}^{\kappa\phi}\left(M^\sigma_{\kappa\phi}dM^\beta_{\beta\sigma}+M^\beta_{\beta\sigma}dM^\sigma_{\kappa\phi}-M^\sigma_{\gamma\phi}dM^\gamma_{\kappa\sigma}-M^\gamma_{\kappa\sigma}dM^\sigma_{\gamma\phi}\right)\wedge d^nx+\\
  +d\beta_{\mu\nu}^\rho\wedge\left[\dif {g}^{\mu\nu}\wedge dx_\rho+2{g}^{\mu\sigma}{M}^\nu_{\rho\sigma}\dif^n x\right]+2\beta_{\mu\nu}^\rho{g}^{\mu\sigma}d{M}^\nu_{\rho\sigma}\wedge\dif^n x+2\beta_{\mu\nu}^\rho{M}^\nu_{\rho\sigma}d{g}^{\mu\sigma}\wedge\dif^n x.
\end{multline*}
Because the Ricci forms
\[
  R_{\phi\kappa}:=d M^\gamma_{\gamma\phi}\wedge dx_{\kappa}-d M^\gamma_{\kappa\phi}\wedge dx_{\gamma}-\left(M^\sigma_{\kappa\phi}M^\beta_{\beta\sigma}-M^\sigma_{\gamma\phi}M^\gamma_{\kappa\sigma}\right)d^nx
\]
are independent of the metric, it is convenient to have at our disposal this version for the Hamiltonian form
\begin{multline}
  \Omega_h=\frac{1}{2}\sqrt{-\det{{g}}}\left({g}^{\kappa\phi}g_{\omega\eta}dg^{\omega\eta}+2d{g}^{\kappa\phi}\right)\wedge R_{\phi\kappa}+\sqrt{-\det{{g}}}{g}^{\kappa\phi} dR_{\phi\kappa}+\\
  +d\beta_{\mu\nu}^\rho\wedge\left[\dif {g}^{\mu\nu}\wedge dx_\rho+2{g}^{\mu\sigma}{M}^\nu_{\rho\sigma}\dif^n x\right]+2\beta_{\mu\nu}^\rho{g}^{\mu\sigma}d{M}^\nu_{\rho\sigma}\wedge\dif^n x+2\beta_{\mu\nu}^\rho{M}^\nu_{\rho\sigma}d{g}^{\mu\sigma}\wedge\dif^n x.\label{eq:OmegaHEasy}
\end{multline}
There is another way to write this multisympletic form (usually called Hamilton-Cartan form) that makes it easy to compare it with analogous structures found in the literature \cite{0264-9381-32-9-095005,gaset19:_new_metric_affin_einst_palat}, namely
\begin{multline}
  \Omega_h=\\
  =dH\wedge d^nx+\frac{1}{2}\sqrt{-\det{{g}}}{g}^{\kappa\phi}g_{\omega\eta}dg^{\omega\eta}\wedge\left(\dif M^\gamma_{\gamma\phi}\wedge dx_{\kappa}-\dif M^\gamma_{\kappa\phi}\wedge dx_{\gamma}\right)+\\
  +\sqrt{-\det{{g}}}d{g}^{\kappa\phi}\wedge\left(\dif M^\gamma_{\gamma\phi}\wedge dx_{\kappa}-\dif M^\gamma_{\kappa\phi}\wedge dx_{\gamma}\right)+d\beta_{\mu\nu}^\rho\wedge\dif {g}^{\mu\nu}\wedge dx_\rho
\end{multline}
where
\[
  H:=\sqrt{-\det{{g}}}{g}^{\kappa\phi}\left(M^\sigma_{\kappa\phi}M^\beta_{\beta\sigma}-M^\sigma_{\gamma\phi}M^\gamma_{\kappa\sigma}\right)+2\beta^\sigma_{\mu\nu}g^{\mu\sigma}M_{\rho\sigma}^\nu
\]
is a local Hamiltonian function. This Hamiltonian form is similar to the one found by Gaset and Rom{\'a}n-Roy, but has an additional term associated to the constraints employed in the formulation of the underlying general variational problem.

\subsubsection{The Hamilton equations for GR with basis}
\label{sec:hamilt-equat-gr}

We are now ready to calculate the Hamilton equations (in the field theoretic sense) associated to the Hamiltonian form just found.

\begin{thm}\label{thm:HamiltonEqsGR}
  The Hamilton equations associated to $\Omega_h$ are the vacuum Einstein equations.
\end{thm}
\begin{proof}
  Let us calculate the equations of motion associated to this form. A local basis of vector fields can be constructed projecting $W_0$ along $p^\ddagger$; the equation defining this set is
  \[
    \eta^{kr}c_{rs}^t\left(2T_{kt}^r+\beta_{kt}^r\right)=0.
  \]
  Here we have used the underlying basis $e_k^\mu$ in order to turn greek indices into latin indices; using this convention, any vector $Y$ tangent in $p^\ddagger\left(w\right)$ to $p^\ddagger\left(W_0\right)$ can be written as
  \[
    Y=T_wp^\ddagger\left(X\right)
  \]
  for some vector $X$ tangent to $W_0$ in $w$. For $X$ of the form
  \[
    X=\tau_{kt}^r\frac{\partial}{\partial\Gamma_{kt}^r}+B_{kt}^r\frac{\partial}{\partial\beta_{kt}^r}
  \]
  the tangency condition becomes
  \begin{equation}\label{eq:TangencyCondition}
    \eta^{kr}c_{rs}^t\left[2\left(\tau_{kt}^r-\tau_{tk}^r\right)+B_{kt}^r\right]=0
  \end{equation}
  and the projection $Y$ takes the form
  \[
    Y=T_wp^\ddagger\left(X\right)=\frac{1}{2}\left(\tau_{kt}^r+\tau_{tk}^r\right)\frac{\partial}{\partial M_{kt}^r}+B_{kt}^r\frac{\partial}{\partial\beta_{kt}^r}.
  \]
  Therefore, a local basis for vector tangent to $p^\ddagger\left(W_0\right)$ is given by
  \[
    \left\{\frac{\partial}{\partial M^\zeta_{\alpha\xi}},B_{\mu\nu}^\rho\frac{\partial}{\partial\beta^\rho_{\mu\nu}},\frac{\partial}{\partial g^{\alpha\xi}},\xi_{W^\ddagger}:\xi\in\mathfrak{l}\right\}
  \]
  where $\mathfrak{l}$ is the Lie algebra of the Lorentz group $L$ and the $B$-coefficients obey the tangency condition \eqref{eq:TangencyCondition}; because of the $L$-invariance, no equation will rise from the contraction with these vectors, so the Hamilton equations will be consequence of the remaining ones. Also, keep in mind that because the previous discussion, there would be relationships between the vectors in the $\beta$-directions. Let us first contract in these directions; we obtain
  \[
    B^\rho_{\mu\nu}\frac{\partial}{\partial\beta^\rho_{\mu\nu}}\lrcorner\Omega_h=B^\rho_{\mu\nu}\left[\dif {g}^{\mu\nu}\wedge dx_\rho+2{g}^{\mu\sigma}{M}^\nu_{\rho\sigma}\dif^n x\right]
  \]
  where the $B$-coefficients are necessary because the $\beta$-variables are not independent on $p^\ddagger\left(W_0\right)\subset W^\ddagger$. Using Equation \eqref{eq:TangencyCondition} with $\tau_{kt}^r-\tau_{tk}^r=0$, it results that these coefficients should be traceless, and it yields to the consequence
  \begin{equation}\label{eq:WeylConnection}
    \dif {g}^{\mu\nu}\wedge dx_\rho+2{g}^{\mu\sigma}{M}^\nu_{\rho\sigma}\dif^n x=\frac{1}{n-1}g^{\mu\nu}\left(g_{\omega\eta}dg^{\omega\eta}\wedge dx_\rho+2M_{\beta\rho}^\beta d^nx\right).
  \end{equation}
  It means that the connection represented by the $M$-coordinates is a \emph{Weyl connection}, namely, a torsionless connection that is metric up to a trace \cite{10.1063/1.529582}. For the contraction with the $M$-directions we obtain
  \begin{multline*}
    \frac{\partial}{\partial M^\zeta_{\alpha\xi}}\lrcorner\Omega_h=\sqrt{-\det{g}}\left[dg^{\alpha\xi}\wedge dx_\zeta+\left(g^{\alpha\phi}M_{\zeta\phi}^\xi+g^{\kappa\xi}M_{\kappa\zeta}^\alpha\right)d^nx\right]-\\
    -\sqrt{-\det{g}}\delta^\alpha_\zeta\left[dg^{\kappa\xi}\wedge dx_\kappa+\left(g^{\kappa\phi}M_{\kappa\phi}^\xi+g^{\kappa\xi}M_{\kappa\beta}^\beta\right)d^nx\right]-\\
    -\frac{1}{2}\sqrt{-\det{g}}g^{\alpha\xi}\left(g_{\omega\eta}dg^{\omega\eta}\wedge dx_\zeta+2M_{\beta\zeta}^\beta d^nx\right)+\\
    +\frac{1}{2}\sqrt{-\det{g}}g^{\kappa\xi}\delta^\alpha_\zeta\left(g_{\omega\eta}dg^{\omega\eta}\wedge dx_\kappa+2M_{\beta\kappa}^\beta d^nx\right)+2g^{\mu\xi}\beta^\alpha_{\mu\zeta}
  \end{multline*}
  Taking into account Equation \eqref{eq:WeylConnection}, we obtain the following formula
  \begin{multline*}
    2g^{\mu\xi}\beta^\alpha_{\mu\zeta}=\frac{n-3}{n-1}\sqrt{-\det{g}}\Big[g^{\kappa\xi}\delta^\alpha_\zeta\left(g_{\omega\eta}dg^{\omega\eta}\wedge dx_\kappa+2M_{\beta\kappa}^\beta d^nx\right)-\\
    -g^{\alpha\xi}\left(g_{\omega\eta}dg^{\omega\eta}\wedge dx_\zeta+2M_{\beta\zeta}^\beta d^nx\right)\Big]
  \end{multline*}
  for the $\beta$-variables. Let us suppose now that $n\not=3$. Summing up $\alpha$ with $\zeta$ in the left hand side we obtain $0$; it gives us the equation
  \[
    g^{\kappa\xi}\left(n-1\right)\left(g_{\omega\eta}dg^{\omega\eta}\wedge dx_\kappa+2M_{\beta\kappa}^\beta d^nx\right)=0,
  \]
  and so every $\beta$ should annihilate. The same is true for the Weyl form
  \[
    Q_\kappa:=g_{\omega\eta}dg^{\omega\eta}\wedge dx_\kappa+2M_{\beta\kappa}^\beta d^nx,
  \]
  meaning that the connection represented by the $M$-variables is metric. The equation defining the primary constraint submanifold then reduces to
  \[
    \eta^{kr}c_{rs}^tT^s_{kt}=0.
  \]
  Using Lemma \ref{lem:SolCompatibility}, we obtain from here that
  \begin{equation}\label{eq:AntiSymFixedTrace}
    \eta^{kr}a_{rs}^tT^s_{kt}=0
  \end{equation}
  where $a_{rs}^t$ is antisymmetric in the lower indices and has fixed trace $a_{qr}^r=(n-1)c_{rq}^r/2$. Now, for any vector $a_s$ we have that
  \[
    \eta^{kr}\left(a_r\delta_s^t-a_s\delta_r^t\right)T^s_{kt}=0,
  \]
  so that Equation \eqref{eq:AntiSymFixedTrace} holds for any antisymmetric matrix (in the lower indices), without the restriction on the trace. Therefore we must have $T_{kt}^s=0$ on solutions of the Hamilton equations; that is, the connection that is solution of these equations should be also torsionless. Using formula \eqref{eq:OmegaHEasy} and the fact that $\beta=0$, the contraction of $\Omega_h$ with the $g$-variables yields to the Einstein equations in vacuum, as required.
\end{proof}
\begin{note}
  The case $n=3$ is very interesting. We still have that $\beta=0$ and so the contraction with elements in the $g$-directions also gives rise to the Einstein equations, but the connection does not reduce to a metric one; instead, it remains a connection of Weyl type. Additionally, it should be noted that this number depends on the signature of $\eta$, which we have chosen here to be $\left(n-1,1\right)$. For a general signature $\left(n-k,k\right)$, the same feature will happen when $n=k+2$. It seems to point out to an additional projective symmetry for the Hamilton equations in these dimensions.
\end{note}

\section{Towards a constraint algorithm for general variational problems}
\label{sec:constr-algor-sing}

It is well known that when working with classical field theory and the Lagrangian of the theory is singular, a constraint algorithm is needed in order to relate the solutions of the Hamilton equations with the solutions of the original Lagrangian field theory. We will try to reproduce this procedure in the general case, with the original field theory given by a triple $\left(\pi:W\to M,\lambda,\cI\right)$; this generality has two reasons: On the one hand, it will allow us to apply the algorithm to some interesting examples, and on the other hand, as a way to show that some constructions that are central in the case of classical variational problems loss their special status when are considered from this generalized viewpoint. One of the main challenges of this generalization is the fact that we have no equivalent notion here of the concept of ``singular Lagrangian''; because of this fact, we will call a general variational problem and a projection $p^\ddagger$ \emph{regular} if a procedure is known in order to lift any solution of the Hamilton equations to a solution of the unified problem. For example, for classical field theory the inverse of the Legendre transformation provides us with such a procedure. Whenever this lifting problem is impossible in some regions of the phase space $C$, we will say that the problem (together with the projection used to define the Hamiltonian theory!) is \emph{singular}.

Please note that, up to now, the subbundle $C$ involved in Definition \ref{def:hamilt-equat-gener} is arbitrary, as long as we are only interested in the Hamilton equations; when we try to relate its solutions with the solutions of the underlying unified variational problem, some relationships arise between this submanifold and the set $P_0\subset W_\lambda$. This section explores these relations.

\subsection{Final constraint submanifold and Hamilton equations}
\label{sec:final-constr-subm}

Let us consider the basic setting of Section \ref{sec:hamilt-equat-gener}; therefore, we have a map $p_\lambda:W_\lambda\to W^\ddagger$, we will admit that this map is horizontal respect to the form $\Theta_\lambda$, and we will single out a subbundle $P\subset P_0$ projecting via $p_\lambda$ onto the submanifold $C\subset W^\ddagger$. Given a solution for Hamilton equations on $C$, it should be interesting to find conditions under which this solution can be lifted to a solution for the unified problem on $W_\lambda$ posed by the form $\Theta_\lambda$. With this objective in mind, let us consider the following property.

\begin{defc}
  Given $\psi:U\subset M\to C$ solution for the Hamilton equations on $C$ and $V\subset T_PW_\lambda$ complementary to $TP+\ker{Tp_\lambda}$, a \emph{$V$-admissible lift for $\psi$} is a section $\widehat{\psi}:U\to P$ such that
  \begin{enumerate}
  \item $p_\lambda\circ\widehat{\psi}=\psi$, and
  \item $\widehat{\psi}^*\left(Z\lrcorner d\Theta_\lambda\right)=0$ for all $Z\in\mathfrak{X}\left(W_\lambda\right)$ such that
    \[
      Z\left(\rho\right)\in\left.V\right|_\rho+\ker{T_\rho p_\lambda}
    \]
    for every $\rho\in P$.
  \end{enumerate}
\end{defc}

Let us introduce the following definition.
\begin{defc}[Final constraint submanifold]\label{def:FinalConstraint}
  A subbundle $P_f\subset P_0$ projecting via $p_\lambda$ onto a submanifold $C_f:=p_\lambda\left(P_f\right)\subset W^\ddagger$ is a \emph{final constraint submanifold for $p_\lambda$} if and only if $C_f$ is final for $\Omega_h$ and such that every solution $\psi:U\subset M\to C_f$ of the Hamilton equations on $C_f$ admits a $V_\psi$-admissible lift, for some complement $V_\psi$ of $TP_f+\mathop{\text{ker}}{Tp_\lambda}$ in $T_{P_f}W_\lambda$.
\end{defc}
The relevance of this definition relies in the following property regarding the lifting of solutions of the Hamilton equations to the unified problem.
\begin{thm}
  Let $p_f:P_f\to C_f$ be a final constraint submanifold; let $\psi:U\to C_f$ be a solution for the Hamilton equations on $C_f$. Then $\widehat{\psi}:U\to P_f$ is a solution for the unified variational problem on $W_\lambda$. Conversely, if $\widehat{\psi}:U\to P_f\subset W_\lambda$ is a solution of the unified variational problem on $W_\lambda$ taking values in a final constraint submanifold, then $\psi:=p\circ\widehat{\psi}:U\to C_f$ is a solution for the associated Hamilton equations on $C_f$.
\end{thm}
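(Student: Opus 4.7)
The plan is to prove both implications via a single algebraic ingredient: the intertwining relation
\[
  (i_f^{W_\lambda})^*\Theta_\lambda = p_f^*(i_f^*\Theta_h),
\]
obtained by applying Lemma \ref{lem:HamiltonFormProp} to the pair $(P_0,C_0)$ and pulling back along $j_f:P_f\hookrightarrow P_0$ using the commutative relations $i_0\circ j_f = i_f^{W_\lambda}$ and $p_0\circ j_f = i_f\circ p_f$ (with $i_f:C_f\hookrightarrow C_0$ and $i_f^{W_\lambda}:P_f\hookrightarrow W_\lambda$). Taking $d$ gives $(i_f^{W_\lambda})^*d\Theta_\lambda = p_f^*d(i_f^*\Theta_h)$, which is the tool that lets one translate between the unified equations on $W_\lambda$ and the Hamilton equations on $C_f$.

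For the lifting direction, I would take an arbitrary $X\in\mathfrak{X}^{V(\pi\circ\tau_\lambda)}(W_\lambda)$ and, along the image of $\widehat{\psi}$, decompose it by means of the direct sum
\[
  T_\rho W_\lambda = (T_\rho P_f + \ker T_\rho p_\lambda)\oplus V_\psi|_\rho
\]
granted by the definition of $V_\psi$, writing $X|_{P_f} = Y_1 + Y_2 + A$ with $Y_1$ tangent to $P_f$, $Y_2\in\ker Tp_\lambda$, and $A\in V_\psi$. The piece $Y_2 + A$ takes values in $V_\psi+\ker Tp_\lambda$, so $V_\psi$-admissibility of $\widehat{\psi}$ (Definition \ref{def:FinalConstraint}) kills its contribution: $\widehat{\psi}^*((Y_2+A)\lrcorner d\Theta_\lambda)=0$. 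Tangency of $Y_1$ to $P_f$ together with the intertwining relation yields
\[
  \widehat{\psi}^*(Y_1\lrcorner d\Theta_\lambda) = \psi^*(X'\lrcorner d(i_f^*\Theta_h))
\]
for any $X'\in\mathfrak{X}(C_f)$ locally extending the vector-along-$\psi$ given by $Tp_f\circ(Y_1\circ\widehat{\psi})$, and this vanishes because $\psi$ solves the Hamilton equations on $C_f$.

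The converse runs the same computation backwards. Given $X'\in\mathfrak{X}(C_f)$ I would use that $p_f$ is a surjective submersion to lift $X'$ locally to $Y\in\mathfrak{X}(P_f)$ with $Tp_f\circ Y = X'\circ p_f$, extend $Y$ to some $\tilde{Y}\in\mathfrak{X}(W_\lambda)$, and invoke the intertwining relation to obtain $\psi^*(X'\lrcorner d(i_f^*\Theta_h)) = \widehat{\psi}^*(\tilde{Y}\lrcorner d\Theta_\lambda)$. When $\tilde{Y}$ can be taken vertical over $M$ (which happens as soon as $X'$ is), the right-hand side vanishes by the unified equations; for non-vertical $X'$ the horizontal part of $\tilde{Y}$ contributes nothing to $\widehat{\psi}^*(\tilde{Y}\lrcorner d\Theta_\lambda)$ simply because $\widehat{\psi}^*d\Theta_\lambda$ is an $(m+1)$-form on the $m$-dimensional manifold $U$, reducing the argument to the vertical case.

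The main obstacle I expect is the bookkeeping in the lifting direction around the decomposition $X|_{P_f}=Y_1+Y_2+A$: the three pieces are only determined pointwise modulo the intersection $TP_f\cap\ker Tp_\lambda$, and they must be realized as genuine vector fields on $W_\lambda$ in order for the admissibility hypothesis and the pullback by $\widehat{\psi}$ to make unambiguous sense. This is a purely local question, handled by choosing adapted coordinates around an arbitrary point of $P_f$ in which the nested subbundles $P_f\subset W_\lambda$, the fibres of $p_\lambda$, and a specific choice of complement $V_\psi$ are simultaneously coordinate subbundles; the remainder of the proof reduces to naturality of pullbacks and the defining properties of the objects involved.
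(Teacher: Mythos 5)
Your proposal is correct and follows essentially the same route as the paper: the same three-way decomposition of a test vector field along $P_f$ into a part tangent to $P_f$, a part in $\ker Tp_\lambda$, and a part in $V_\psi$, with the last two killed by $V_\psi$-admissibility and the first handled via the intertwining relation coming from Lemma \ref{lem:HamiltonFormProp}. Your explicit treatment of the converse and your attention to realizing the pointwise decomposition by genuine vector fields are slightly more careful than the paper's one-line computation, but they do not constitute a different argument.
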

\begin{proof}
  Any $p_\lambda$-projectable vector field $X$ on $P_f\subset W_\lambda$, when restricted to $P_f$, can be written as
  \[
    X\circ i_f=Ti_f\circ X_f+Z\circ i_f+W
  \]
  where $X_f\in\mathfrak{X}\left(P_f\right)$ projects onto a vector field $\overline{X}_f\in\mathfrak{X}\left(C_f\right)$, $Z\in\mathfrak{X}^{Vp_\lambda}\left(W_\lambda\right)$ and $W\left(\rho\right)\in V_\psi$ for every $\rho\in P_f$. Then, using Lemma \ref{lem:HamiltonFormProp}
  \begin{align*}
    \widehat{\psi}^*\left(X\lrcorner d\Theta_\lambda\right)&=\widehat{\psi}^*\left(\left(Ti_f\circ X_f\right)\lrcorner d\Theta_\lambda\right)+\widehat{\psi}^*\left(\left(Z+W\right)\lrcorner d\Theta_\lambda\right)\\                                                      &=\widehat{\psi}^*\left(X_f\lrcorner\left(i_f^*\Omega_\lambda\right)\right)\\ &=\widehat{\psi}^*\left(X_f\lrcorner\left(p_\lambda^*\Omega_h\right)\right)\\
                                                           &=\left(p_\lambda\circ\widehat{\psi}\right)^*\left[\left(Tp_\lambda\circ X_f\right)\lrcorner\Omega_h\right]\\                                                           &=\psi^*\left(\overline{X}_f\lrcorner\Omega_h\right)\\
                                                           &=0
  \end{align*}
  as required.
\end{proof}
\begin{note}
  There is a quite important case not covered by this theorem: Are there solutions of the unified variational problem not taking values in a final constraint submanifold? 
\end{note}
\begin{example}[The final constraint submanifold for the classical variational problem associated to a regular Lagrangian density]
  In this case we have that the projection map is $p_\lambda:=p_\cL^\ddagger$. Then $P_f=P_0,C_f=J^{k+1}\pi^\ddagger$, and the section $\gamma_0$ is implicitly defined by the equations
  \[
    p_\alpha^{\left(I,i\right)}=\frac{\partial L}{\partial u_{I+1_i}}.
  \]
  Namely, to give this section is to provide a set of functions $u_{I}^\alpha,\abs{I}=k+1,$ on $J^{k+1}\pi^\ddagger$ with the additional requirement that $\mathop{\text{Im}}\gamma_0\subset P_0$; because of the regularity of $\cL$, we can use the equations defining $P_0$ to find these functions. 
\end{example}

\subsection{The constraint algorithm for general variational problems}
\label{sec:constr-algor-gener}

We are now ready to formulate a version of the constraint algorithm working for general variational problems. Take $P_0$ as initial submanifold, $C_0:=p_\lambda\left(P_0\right)$ and define for every $l=0,1,2,\cdots$ the set $C_{l+1}\subset C_l$ subject to the following requirement: $C_{l+1}$ is  the maximal set in $C_l'$ (the solution set of $C_l$) such that there exists a complement $V_l$ of $\mathop{\text{ker}}{Tp_\lambda}+TP_l$ in $T_{P_l}W_\lambda$ with the property that every $\psi\in H_{C_l}$ taking values in $C_{l+1}$ has a $V_l$-admissible lift with image in $P_l$. Then define $P_{l+1}:=p_\lambda^{-1}\left(C_{l+1}\right)$.
\begin{prop}
  Let $L$ be the minimum integer such that $C_L=C_{L+1}$. Then $P_L$ is a final constraint submanifold for $p_\lambda$. 
\end{prop}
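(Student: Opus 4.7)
The plan is to read the proposition as an unpacking of the fixed-point equality $C_L = C_{L+1}$: this single condition simultaneously encodes the two clauses of Definition \ref{def:FinalConstraint}, namely that $C_L$ is final for $\Omega_h$ and that every solution of the Hamilton equations on $C_L$ admits a $V$-admissible lift. The argument therefore splits into two short verifications, each obtained by chasing definitions.

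First I would establish finality, that is, the equality $C_L = C_L'$, where $C_L'$ denotes the solution set of \eqref{eq:GeneralHamiltonEqs} on $C_L$. By construction of the algorithm one has $C_{L+1}\subset C_L'$, while the solution set of any subbundle is contained in that subbundle, so $C_L'\subset C_L$. Combined with the hypothesis $C_L=C_{L+1}$, this yields the chain
\[
  C_L \;=\; C_{L+1} \;\subset\; C_L' \;\subset\; C_L,
\]
which forces $C_L = C_L'$; this is precisely the finality condition introduced after Definition \ref{def:hamilt-equat-gener}.

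Next I would verify the lifting property. By the defining property of $C_{L+1}$ there exists a complement $V_L$ of $TP_L+\ker{Tp_\lambda}$ inside $T_{P_L}W_\lambda$ such that every $\psi\in H_{C_L}$ taking values in $C_{L+1}$ admits a $V_L$-admissible lift with image in $P_L$. Since $C_L=C_{L+1}$, every solution of the Hamilton equations on $C_L$ trivially takes values in $C_{L+1}$, and therefore such a lift exists. Taking uniformly $V_\psi := V_L$ across all solutions then produces exactly the data required by Definition \ref{def:FinalConstraint}, while the inclusion $P_L\subset P_0$ is inherited from the recursive definition $P_{l+1}=p_\lambda^{-1}(C_{l+1})$ together with $C_{l+1}\subset C_l$.

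I do not anticipate any genuine obstacle in this proof: the statement is essentially tautological once one accepts the construction of the constraint algorithm. The substantive content has been absorbed into the definitions themselves, and the nontrivial assertions—that the maximal subset $C_{l+1}$ enjoying the lifting property is indeed a submanifold and that a global complement $V_l$ can be coherently selected on $P_l$—are stipulations built into the algorithm rather than consequences of the present proposition. Those are precisely the places where additional regularity hypotheses on $p_\lambda$ or on the successive projections would have to be invoked if one wished the whole scheme to be made fully self-contained.
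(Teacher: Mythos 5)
Your proof is correct and follows essentially the same route as the paper's, whose entire argument is the one-line observation that $C_L=C_{L+1}$ gives finality and then invokes Definition \ref{def:FinalConstraint}. Your version merely spells out the two definitional verifications the paper leaves implicit — the inclusion chain $C_L=C_{L+1}\subset C_L'\subset C_L$ for finality, and the uniform choice $V_\psi:=V_L$ for the lifting clause.
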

\begin{proof}
  When $C_{L+1}=C_L$, then $C_L$ is final for $\Theta_h$ and by Definition \ref{def:FinalConstraint}, $P_L$ is a final constraint submanifold.
\end{proof}

\begin{note}
  It could be possible for a map to have as final constraint submanifold the empty set.
\end{note}

\subsection{Examples}
\label{sec:examples-1}

Let us apply the previous scheme to some well-known systems, in order to show how it allows us to extract the constraints associated to a singular Lagrangian.

\subsubsection{A pair of hanging springs}
\label{sec:pair-hanging-springs}

Let us consider an example from \cite{brown2023singular}. It represents two springs (with spring constants $k_1$ and $k_2$) attached end-to-end and hanging from the roof; additionally, a mass $m$ is located at the free end of the composed spring. If we indicate with $x_1$ and $x_2$ the length of each of the springs, the Lagrangian of the system becomes
\[
  L\left(x,\dot{x}\right):=\frac{m}{2}\left(\dot{x}^1+\dot{x}^2\right)^2+mg\left(x^1+x^2\right)-\frac{k_1}{2}\left(x^1-l_1\right)^2-\frac{k_2}{2}\left(x^2-l_2\right)^2,
\]
where $l_1$ and $l_2$ are the equilibrium length for each of the springs. Thus we are dealing with a classical variational problem associated to the trivial bundle $\text{pr}_1:\mR\times\mR^2\to\mR$; therefore we need to consider the Lagrangian $L$ as defining a Lagrangian density $\cL=Ldt$ on $J^1\text{pr}_1=\mR\times T\mR^2$ with coordinates $\left(t,x^1,x^2,\dot{x}^1,\dot{x}^2\right)$. The space of forms $W_L$ has global coordinates $\left(t,x^1,x^2,\dot{x}^1,\dot{x}^2,p_1,p_2\right)$ in such a way that $\rho\in W_L$ if and only if
\[
  \rho=L\left(x,\dot{x}\right)dt+p_i\left(dx^i-\dot{x}^idt\right).
\]
The projection defining the Hamiltonian theory is in this case
\[
  p_\cL^\ddagger:W_L\to\mR\times\mR^2\times\mR^2:\left(t,x^1,x^2,\dot{x}^1,\dot{x}^2,p_1,p_2\right)\mapsto\left(t,x^1,x^2,p_1,p_2\right).
\]
Then
\[
  Vp_\cL^\ddagger=\left<\frac{\partial}{\partial\dot{x}^1},\frac{\partial}{\partial\dot{x}^2}\right>
\]
and because $\Theta_L=L\left(x,\dot{x}\right)dt+p_i\left(dx^i-\dot{x}^idt\right)$, it follows that $P_0$ will be defined by the equations
\[
  \left[m\left(\dot{x}^1+\dot{x}^2\right)-p_1\right]dt=0=\left[m\left(\dot{x}^1+\dot{x}^2\right)-p_2\right]dt,
\]
namely,
\[
  P_0=\left\{\left(t,x^1,x^2,\dot{x}^1,\dot{x}^2,p_1,p_2\right):p_1=p_2=m\left(\dot{x}^1+\dot{x}^2\right)\right\}.
\]
This submanifold can be parameterized through the coordinates $\left(t,x^1,x^2,\dot{x}^1,p\right)$ and we can embed it into $W_L$ with the map
\[
  i_0\left(t,x^1,x^2,\dot{x}^1,p\right)=\left(t,x^1,x^2,\dot{x}^1,\frac{p}{m}-\dot{x}^1,p,p\right).
\]
The Hamiltonian form $\Theta_0$ should be retrieved from the pullback form
\begin{align*}
  \Theta_0&=i_0^*\Theta_L\\
  &=L\left(x,\dot{x}\right)dt+p\left(dx^1-\dot{x}^1dt+dx^2-\dot{x}^2dt\right)\\
  &=\left[L\left(x,\dot{x}\right)-p\left(\dot{x}^1+\dot{x}^2\right)\right]dt+p\left(dx^1+dx^2\right);
\end{align*}
now, because
\begin{align}
  L\left(x,\dot{x}\right)&-p\left(\dot{x}^1+\dot{x}^2\right)=\cr
  &=\frac{m}{2}\left(\frac{p}{m}\right)^2+mg\left(x^1+x^2\right)-\frac{k_1}{2}\left(x^1-l_1\right)^2-\frac{k_2}{2}\left(x^2-l_2\right)^2-\frac{p^2}{m}\cr
  &=mg\left(x^1+x^2\right)-\frac{k_1}{2}\left(x^1-l_1\right)^2-\frac{k_2}{2}\left(x^2-l_2\right)^2-\frac{p^2}{2m}\cr
  &=:-H_0\left(t,x,p\right),\label{eq:FirstHamiltonian}
\end{align}
we obtain
\[
  \Theta_0=-H_0\left(t,x,p\right)dt+p\left(dx^1+dx^2\right).
\]
The Hamilton-Cartan equations on $C_0=\left\{\left(t,x^1,x^2,p\right)\right\}$ become
\begin{align*}
  &dp-\frac{\partial H_0}{\partial x^1}dt=0\\
  &dp-\frac{\partial H_0}{\partial x^2}dt=0\\
  &\left(dx^1+dx^2\right)-\frac{\partial H_0}{\partial p}dt=0;
\end{align*}
using Equation \eqref{eq:FirstHamiltonian} it yields to 
\begin{align}
  &dp+\left[mg-k_1\left(x^1-l_1\right)\right]dt=0\label{eq:FirstHamilton}\\
  &dp+\left[mg-k_2\left(x^2-l_2\right)\right]dt=0\label{eq:SecondHamilton}\\
  &\left(dx^1+dx^2\right)-\frac{p}{m}dt=0.\label{eq:LastHamilton}
\end{align}
Therefore, $C_0$ is not a final manifold for $\Theta_h$, because any solution should live in the submanifold $C_1\subset C_0$ given by the equation
\[
  k_1\left(x^1-l_1\right)-k_2\left(x^2-l_2\right)=0
\]
obtained from the subtraction of the second to the first equation. Given that
\[
  \dot{x}^2=\frac{p}{m}-\dot{x}^1,
\]
it results that $TP_0=U_0^0$ and $\mathop{\text{ker}}{Tp_L}=W_0^0$, where
\begin{align*}
  U_0&=\left<d\dot{x}^2-\frac{1}{m}dp+d\dot{x}^1,dp_1-dp_2\right>\\
  W_0&=\left<dx^1,dx^2,dp_1,dp_2\right>.
\end{align*}
Then, because
\[
  U_0+W_0=\left<dx^1,dx^2,dp_1,dp_2,d\dot{x}^2-\frac{1}{m}dp+d\dot{x}^1\right>,
\]
we have that
\[
  TP_0\cap\mathop{\text{ker}}{Tp_L}=\left(U_0+W_0\right)^0=\left<dx^1,dx^2,dp_1,dp_2,d\dot{x}^2-\frac{1}{m}dp+d\dot{x}^1\right>^0,
\]
and so the complement $V_0$ should have dimension $1$. Using as complement the subbundle
\[
  V_0:=\left<\frac{\partial}{\partial p_1}\right>,
\]
we obtain that the lifting $\widehat{\psi}$ for any solution $\psi$ of the Hamilton equations should annihilate the form
\[
  \alpha_0:=dx^1-\dot{x}^1dt.
\]
It means that if $\psi\left(t\right)=\left(t,x^1\left(t\right),x^2\left(t\right),p\left(t\right)\right)$ and $\widehat{\psi}\left(t\right)=\left(t,x^1\left(t\right),x^2\left(t\right),\dot{x}^1\left(t\right),p\left(t\right)\right)$, then the function $\dot{x}^1=\dot{x}^1\left(t\right)$ is determined by the formula
\begin{equation}\label{eq:SolutionLift}
  \dot{x}^1\left(t\right)=\frac{\text{d}x^1}{\text{d}t}\left(t\right).
\end{equation}
Next, and according to the algorithm described in Section \ref{sec:constr-algor-gener}, we must define $P_1:=\left(p_L^\ddagger\right)^{-1}\left(C_1\right)$, and consider the Hamilton equations on $C_1$; we obtain
\begin{align*}
  \Theta_1&:=i_1^*\Theta_0\\
  &=-H_1\left(t,x_1,p\right)dt+\frac{\left(k_1+k_2\right)p}{k_2}dx^1
\end{align*}
where
\[
  H_1\left(t,x_1,p\right)=-\frac{mg}{k_2}\left[\left(k_1+k_2\right)x_1+l_2k_2-k_1l_1\right]+\frac{k_1\left(k_1+k_2\right)}{2k_2}\left(x^1-l_1\right)^2+\frac{p^2}{2m}.
\]
It is nothing but the Hamiltonian given by Equation $\left(15\right)$ in \cite{brown2023singular}. The Hamilton equations on $C_1$ become
\[
  -\left[\frac{mg\left(k_1+k_2\right)}{k_2}-\frac{k_1\left(k_1+k_2\right)}{2k_2}\left(x^1-l_1\right)\right]dt+\frac{k_1+k_2}{k_2}dp=0
\]
by contracting with $\partial/\partial x^1$, and
\[
  \frac{p}{m}dt-\frac{k_1+k_2}{k_2}dx^1=0
\]
when contracting with $\partial/\partial p$; because no restriction is necessary to find solutions to these equations, we can conclude that $C_1$ is a final submanifold for $\Omega_h$.

Now, it is necessary to prove that for every solution $\psi\left(t\right)=\left(t,x^1\left(t\right),p\left(t\right)\right)$ of the above equations we can construct a section $\widehat{\psi}\left(t\right)=\left(t,x^1\left(t\right),\dot{x}^1\left(t\right),p\left(t\right)\right)$ of $P_1\to\mR$ covering $\psi$, and such that
\[
  \widehat{\psi}^*\left(Z\lrcorner d\Theta_L\right)=0,\qquad Z\in V_1
\]
for $V_1$ a complement of $TP_1+\mathop{\text{ker}}{Tp_L^\ddagger}\subset T_{P_1}\left(W_L\right)$.
Because $TP_1=U_1^0$, where
\[
  U_1=U_0+\left<k_1dx^1-k_2dx^2\right>,
\]
it results that
\[
  U_1+W_0=U_0+W_0,
\]
and so $V_1=V_0$, namely, $\widehat{\psi}$ can be constructed using Equation \eqref{eq:SolutionLift} as before. Then $C_1$ is final for $\Theta_h$, and also $P_1=P_f$, a final constraint submanifold.

\subsubsection{Wave equation}
\label{sec:wave-equation}

Let us consider how can we get a non standard Hamiltonian theory for the wave equation. In order to proceed, consider the bundle
\[
  \text{pr}_{12}:\mR^3\to\mR^2:\left(t,x,u\right)\mapsto\left(t,x\right);
\]
then the wave equation can be represented by pulling back the contact structure on $J^2\text{pr}_{12}$ to the submanifold
\[
  R:=\left\{\left(t,x,u,u_t,u_x,u_{tt},u_{tx},u_{xx}\right)\in J^2\text{pr}_{12}:u_{tt}=u_{xx}\right\}\subset J^2\text{pr}_{12}.
\]
Define $\pi:R\to\mR^2$ as the restriction of $\left(\text{pr}_{12}\right)_{02}$ to $R$; then a general variational problem representing this equation could be
\[
  \left(\pi:R\to\mR^2,0,\cI_R\right)
\]
where $\cI_R$ is the pullback of the contact structure on $J^2\text{pr}_{12}$ to $R$. Using Gotay construction, we define the bundle
\[
  W:=I_R^m\subset\wedge^m_2\left(J^2\text{pr}_{12}\right)
\]
and an element $\rho\in W$ if and only if
\[
  \rho=p^1\theta\wedge dx-p^2\theta\wedge dt+p^{11}\theta_1\wedge dx-p^{12}\theta_1\wedge dt+p^{21}\theta_2\wedge dx-p^{22}\theta_2\wedge dt, 
\]
where
\[
  \theta=du-u_tdt-u_xdx,\quad\theta_1=du_t-u_{tt}dt-u_{tx}dx,\quad\theta_2=du_x-u_{tx}dt-u_{tt}dx
\]
are the canonical generators for the contact structure on $R$ (where $u_{tt}=u_{xx}$). The canonical form $\Theta$ on $W$ is thus given by the formula
\begin{multline*}
  \Theta=\\
  =-\left[p^1u_t+p^2u_x+\left(p^{11}+p^{22}\right)u_{tt}+\left(p^{12}+p^{21}\right)u_{tx}\right]dt\wedge dx+p^1du\wedge dx-p^2du\wedge dt+\\
  +p^{11}du_t\wedge dx-p^{12}du_t\wedge dt+p^{21}du_x\wedge dx-p^{22}du_x\wedge dt.
\end{multline*}
We will use the following projection onto $W^\ddagger:=\mR^{10}$
\[
  p:W\to W^\ddagger:\left(t,x,u,u_t,u_x,u_{tt},u_{tx},p^I\right)\mapsto\left(t,x,u,u_t,u_x,p^I\right),\qquad\abs{I}\leq 2,
\]
in order to construct the Hamilton equations. Then
\[
  Vp=\left<\frac{\partial}{\partial u_{tt}},\frac{\partial}{\partial u_{tx}}\right>,
\]
and the first constraint manifold $P_0\subset W$ becomes described by the equations
\[
  p^{11}+p^{22}=0,\qquad p^{12}+p^{21}=0.
\]
It means that
\[
  Vp=\left<dt,dx,du,du_t,du_x,dp^I\right>^0,\qquad TP_0=\left<dp^{11}+dp^{22},dp^{12}+dp^{21}\right>^0,
\]
and so $Vp\subset TP_0$; then for the complementary subbundle
\begin{equation}\label{eq:ComplementaryWave}
  V_0=\left<\frac{\partial}{\partial p^{11}},\frac{\partial}{\partial p^{12}}\right>
\end{equation}
we have that
\[
  T_{P_0}W=TP_0\oplus V_0.
\]
The Hamiltonian form on
\[
  C_0:=p\left(P_0\right)=\left\{\left(t,x,u,u_t,u_x,p^1,p^2,p^{11},p^{12},-p^{12},-p^{11}\right)\right\}
\]
becomes
\begin{multline*}
  \Theta_h=-\left(p^1u_t+p^2u_x\right)dt\wedge dx+p^1du\wedge dx-p^2du\wedge dt+\\
  +p^{11}\left(du_t\wedge dx+du_x\wedge dt\right)-p^{12}\left(du_t\wedge dt+du_x\wedge dx\right).
\end{multline*}
The Hamilton equations on $C_0$ are
\begin{align*}
  \left(\frac{\partial}{\partial u}\right)&:dp^1\wedge dx-dp^2\wedge dt=0,\\
  \left(\frac{\partial}{\partial u_t}\right)&:-p^1 dt\wedge dx-dp^{11}\wedge dx+dp^{12}\wedge dt=0,\\
  \left(\frac{\partial}{\partial u_x}\right)&:-p^2 dt\wedge dx-dp^{11}\wedge dt+dp^{12}\wedge dx=0,\\
  \left(\frac{\partial}{\partial p^1}\right)&:-u_t dt\wedge dx+du\wedge dx=0,\\
  \left(\frac{\partial}{\partial p^2}\right)&:-u_x dt\wedge dx-du\wedge dt=0,\\
  \left(\frac{\partial}{\partial p^{11}}\right)&:du_t\wedge dx+du_x\wedge dt=0,\\
  \left(\frac{\partial}{\partial p^{12}}\right)&:du_t\wedge dt+du_x\wedge dx=0.
\end{align*}
No further restrictions are necessary in order to ensure that this system has solutions; therefore, $C_0$ is final for $\Theta_h$. Thus, the next step in the constraint algorithm is to construct a lift $\widehat{\psi}:U\subset\mR^2\to P_0$ for every solution $\psi:U\to C_0$ of these equations, such that
\[
  \widehat{\psi}^*\left(Z\lrcorner d\Theta\right)=0
\]
for every $Z\in\mathfrak{X}^{V_0}\left(W\right)$; using Equation \eqref{eq:ComplementaryWave}, it follows that
\[
  \widehat{\psi}=\left(t,x,,u,u_t,u_x,u_{tt},u_{tx},p^1,p^2,p^{11},p^{12},-p^{12},-p^{11}\right)
\]
must be integral for the ideal generated by the forms
\[
  -u_{tt}dt\wedge dx+du_t\wedge dx,\qquad-u_{tx}dt\wedge dx+du_t\wedge dt.
\]
It is straightforward to check that we can use the equations associated to these forms to choose the unknown functions $u_{tt},u_{tx}$, and also the lift is guaranteed; so, $P_0$ is a final constraint submanifold.

\printbibliography
%\bibliography style{plain}
%\bibliography{BibTHarm}
%\bibliography{/home/santi/Dropbox/Trabajos/Bibliography/iniciotesis}

\end{document}